\newcommand{\bx}{\boldsymbol{x}}
\newcommand{\by}{\boldsymbol{y}}
\newcommand{\bu}{\boldsymbol{u}}
\newcommand{\bw}{\boldsymbol{w}}
\newcommand{\argmax}{\operatornamewithlimits{argmax}}
\newcommand{\prox}{\operatorname{prox}}
\theoremstyle{definition}
\newtheorem{defn}{Definition}[section]
\newtheorem{theo}{Theorem}[section]
\newtheorem{coro}{Corollary}[section]
\title{\bf Proximal Markov chain Monte Carlo algorithms}
\author{Marcelo Pereyra \hspace{.2cm}\\
University of Bristol, Department of Mathematics,\\ University Walk, Bristol, BS8 1TW, UK
}
\begin{document}

\def\spacingset#1{\renewcommand{\baselinestretch}%
{#1}\small\normalsize} \spacingset{1}

\maketitle

\begin{abstract}
This paper presents a new Metropolis-adjusted Langevin algorithm (MALA) that uses convex analysis to simulate efficiently from high-dimensional densities that are log-concave, a class of probability distributions that is widely used in modern high-dimensional statistics and data analysis. The method is based on a new first-order approximation for Langevin diffusions that exploits log-concavity to construct Markov chains with favourable convergence properties. This approximation is closely related to Moreau--Yoshida regularisations for convex functions and uses proximity mappings instead of gradient mappings to approximate the continuous-time process. The proposed method complements existing MALA methods in two ways. First, the method is shown to have very robust stability properties and to converge geometrically for many target densities for which other MALA are not geometric, or only if the step size is sufficiently small. Second, the method can be applied to high-dimensional target densities that are not continuously differentiable, a class of distributions that is increasingly used in image processing and machine learning and that is beyond the scope of existing MALA and HMC algorithms. To use this method it is necessary to compute or to approximate efficiently the proximity mappings of the logarithm of the target density. For several popular models, including many Bayesian models used in modern signal and image processing and machine learning, this can be achieved with convex optimisation algorithms and with approximations based on proximal splitting techniques, which can be implemented in parallel. The proposed method is demonstrated on two challenging high-dimensional and non-differentiable models related to image resolution enhancement and low-rank matrix estimation that are not well addressed by existing MCMC methodology.
\end{abstract}

\textbf{Keywords:} Bayesian inference; Convex analysis; high-dimensional statistics; Markov chain Monte Carlo; Proximal algorithms; Signal processing.
\spacingset{1.45}
\section{Introduction}
\label{sec:intro}
With ever-increasing computational resources Monte Carlo sampling methods have become fundamental to modern statistical science and many of the disciplines it underpins. In particular, Markov chain Monte Carlo (MCMC) algorithms have emerged as a flexible and general purpose methodology that is now routinely applied in diverse areas ranging from statistical signal processing and machine learning to biology and social sciences. Monte Carlo sampling in high dimensions is generally challenging, especially in cases where standard techniques such as Gibbs sampling are not possible or ineffective. The most effective general purpose Monte Carlo methods for high-dimensional models are arguably the Metropolis-adjusted Langevin algorithms (MALA) \citep[p.371]{Robert} and Hamiltonian Monte Carlo (HMC) \citep{NealHMC}, two classes of MCMC methods that use gradient mappings to capture local properties of the target density and explore the parameter space efficiently. 

Advanced versions of MALA and HMC use other elements of differential calculus to achieve higher efficiency. For example, \citet{Minka2002} and \citet{Zhang2011} use Hessian matrices of the target density to capture higher-order information related to scale and correlation structure. Similarly, \citet{Girolami2011} use differential geometry to lift these methods from Euclidean spaces to Riemannian manifolds where the target density is isotropic. In this paper we move away from differential calculus and explore the potential of convex analysis for MCMC sampling from distributions that are log-concave.

Log-concave distributions, also known as ``convex models'' outside the statistical literature, are widely used in high-dimensional statistics and data analysis and, among other things, play a central role in revolutionary techniques such as compressive sensing and image super-resolution (see \citet{Candes2008,Chandrasekaran2012,Candes2009c} for examples in machine learning, signal and image processing, and high-dimensional statistics). Performing inference in these models is a challenging problem that currently receives a lot of attention. A major breakthrough on this topic has been the adoption of convex analysis in high-dimensional optimisation, which led to the development of the so-called ``proximal algorithms'' that use proximity mappings of concave functions, instead of gradient mappings, to construct fixed point schemes and compute function maxima (see \citet{Combettes2011} and \citet{BoydBook} for two recent tutorials on this topic). These algorithms are now routinely used to find the maximisers of posterior distributions that are log-concave and often non-smooth and very high-high-dimensionaldimensional \citep{Wainwright2012,ChandrasekaranPNAS2013, Figueiredo2011, Candes2009c, Candes2011, Chandrasekaran2011,Pesquet2012a}.

In this paper we use convex analysis and proximal techniques to construct a new Langevin MCMC method for high-dimensional distributions that are log-concave and possibly not continuously differentiable. Our experiments show that the method is potentially useful for performing Bayesian inference in many models related to signal and image processing that are not well addressed by existing MCMC methodology, for example, non-differentiable models with synthesis and analysis Laplace priors, priors related to total-variation, nuclear and elastic-net norms or with constraints to convex sets, such as norm balls and the positive semidefinite cone.

The remainder of the paper is structured as follows: Section \ref{ProbStat} specifies the class of distributions considered, defines some elements of convex analysis which are essential for our methods, and briefly recalls the unadjusted Langevin algorithm (ULA) and its Metropolised version MALA. In Section \ref{ss:pula} we present a proximal ULA for log-concave distributions and study its geometric convergence properties. Following on from this, Section \ref{ss:pmala} presents a proximal MALA which inherits the favourable convergence properties of the  unadjusted algorithm while guaranteeing convergence to the desired target density. Section \ref{sec:experiments} demonstrates the proposed methodology on two challenging high-dimensional applications related to image resolution enhancement and low-rank matrix estimation. Conclusions and potential extensions are finally discussed in Section \ref{sec:conclusion}. A MATLAB implementation of the proposed methods is available at \url{http://www.maths.bris.ac.uk/~mp12320/code/ProxMCMC.zip}.

\section{Definitions and notations}\label{ProbStat}
\subsection{Convex analysis}\label{ss:CA}
Let $\bx \in \mathbb{R}^n$ and let $\pi(d\bx)$ be a probability distribution which admits a density $\pi(\bx)$ with respect to the usual $n$-dimensional Lebesgue measure. We consider the problem of simulating from target densities of the form
\begin{equation}\label{pi}
 \pi(\bx) = \exp{\{g(\bx)\}}/\kappa,
\end{equation}
where $g: \mathbb{R}^n \rightarrow[0,\infty)$ is a concave upper semicontinuous function satisfying $\lim_{\|\bx\| \rightarrow \infty} g(\bx) = -\infty$. It is assumed that $g(\bx)$ can be evaluated point-wise and that the normalising constant $\kappa$ may be unknown. Although not denoted explicitly, $g$ may depend on the value of an observation vector, for instance in Bayesian inference problems. The methods presented in this paper will require $g$ to have a \emph{proximity mapping} that is inexpensive to evaluate or to approximate. 

\begin{defn}{\textbf{Proximity mappings.}}
The $\lambda$-proximity mapping or proximal operator of a concave function $g$ is defined for any $\lambda > 0$ as \citep{Moreau1962}
\begin{equation}\label{proxMap}
\prox^\lambda_{g}(\bx) = \argmax_{\bu \in \mathbb{R}^n}\, g(\bu) -\|\bu - \bx\|^2/2\lambda.
\end{equation}
In order to gain intuition about this mapping it is useful to analyse its behaviour when the regularisation parameter $\lambda \in \mathbb{R}^+$ is either very small or very large. In the limit $\lambda \rightarrow \infty$, the quadratic penalty term vanishes and \eqref{proxMap} maps all points to the set of maximisers of $g$. In the opposite limit $\lambda \rightarrow 0$, the quadratic penalty dominates \eqref{proxMap} and the proximity mapping coincides with the identity operator, i.e., $\prox^\lambda_{g}(\bx) = \bx$. For finite values of $\lambda$, $\prox^\lambda_{g}(\bx)$ behaves similarly to a gradient mapping and moves points in the direction of the maximisers of $g$. Indeed, proximity mappings share many important properties with gradient mappings that are useful for devising fixed point methods, such as being firmly non-expansive, i.e., $\|\prox^\lambda_{g}(\bx)-\prox^\lambda_{g}(\by)\|^2 \leq (\bx-\by)^T\{\prox^\lambda_{g}(\bx)-\prox^\lambda_{g}(\by)\}, \forall \bx,\by \in \mathbb{R}^n$ \citep[Ch. 12]{CombettesBook}, and having the set of maximisers of $g$ as fixed points. These mappings were originally studied by \citet{Moreau1962}, \citet{Martinet} and \citet{Rockafellar} several decades ago. They have recently regained very significant attention in the convex optimisation community because of their capacity to move efficiently in high-dimensional and possibly non-differentiable scenarios, and are now used extensively in the proximal optimisation algorithms that underpin modern high-dimensional statistics, signal and image processing, and machine learning \citep{Combettes2011,Wainwright2012,ChandrasekaranPNAS2013,BoydBook}. Section \ref{PPMCMC} shows that proximity mappings are not only useful for optimisation, they also hold great potential for stochastic simulation.
\end{defn}

\begin{defn}{\textbf{Moreau approximations.}}
For any $\lambda > 0$, define the $\lambda$-Moreau approximation of $\pi$ as the following density
\begin{equation}\label{MoreauApprox}
\pi_\lambda(\bx) = \sup_{\bu \in \mathbb{R}^n} \, \pi(\bu) \exp{\left(-\|\bu - \bx\|^2/2\lambda\right)}/\kappa^\prime,
\end{equation}
with normalising constant $\kappa^\prime \in \mathbb{R}^+$. Moreau approximations \eqref{MoreauApprox} are closely related to Moreau--Yoshida envelope functions from convex analysis \citep{CombettesBook}. Precisely, $\log \pi_\lambda(\bx)$ is equal to the $\lambda$-Moreau-Yoshida envelope of $\log\pi(\bx)$ up to the additive constant $\log\kappa^\prime$. Note that $\pi_\lambda(\bx)$ can be efficiently evaluated (up to a constant) by using $\prox^\lambda_g(\bx)$, i.e., $\pi_\lambda(\bx) \propto  \exp{\left[g\{\prox^\lambda_{g}(\bx)\}\right]} \exp{\{-\|\prox^\lambda_{g}(\bx) - \bx\|^2/2\lambda\}}$.
\end{defn}

\begin{defn}{\textbf{Class of distributions $\mathcal{E}(\beta,\gamma)$}}
We say that $\pi$ belongs to the one-dimensional class of distributions with exponential tails $\mathcal{E}(\beta,\gamma)$ if for some $u$, and some constants $\gamma > 0$ and $\beta > 0$, $\pi$ takes the form 
\begin{equation}\label{classE}
\pi(x) \propto \exp{\left(-\gamma |x|^\beta\right)}, \quad |x| > u.
\end{equation}
\end{defn}

Moreau approximations have several properties that will be useful for constructing algorithms to simulate from $\pi$.

\noindent 1. \emph{Convergence to $\pi$}: The approximation $\pi_\lambda(\bx)$ converges point-wise to $\pi(\bx)$ as $\lambda \rightarrow 0$.\\
\noindent 2. \emph{Differentiability}: $\pi_\lambda(\bx)$ is continuously differentiable even if $\pi$ is not, and its log-gradient is $\nabla \log\pi_\lambda(\bx) = \{\prox^\lambda_{g}(\bx)-\bx\}/\lambda$.\\
\noindent 3. \emph{Subdifferential}: The point $\{\prox^\lambda_{g}(\bx)-\bx\}/\lambda$ belongs to the subdifferential\footnote{A vector $\bu \in \mathbb{R}^n$ is a subgradient of the concave function $g$ at the point $\bx_0 \in \mathbb{R}^n$ if $g(\bx) \leq g(\bx_0) + (\bx-\bx_0)^T\bu$ for all $\bx \in \mathbb{R}^n$. The set $\partial g(\bx_0)$ of all such subgradients  is called the subdifferential set of $g$ at the point $\bx_0$.} set of $\log\pi$ at $\prox^\lambda_{g}(\bx)$, i.e., $\{\prox^\lambda_{g}(\bx)-\bx\}/\lambda \in \partial\log\pi\{\prox^\lambda_{g}(\bx)\}$ \cite[Ch. 16]{CombettesBook}. In addition, if $\log\pi$ is differentiable at $\prox^\lambda_{g}(\bx)$ then its subdifferential collapses to a single point, i.e., $\{\prox^\lambda_{g}(\bx)-\bx\}/\lambda =  \nabla\log\pi\{\prox^\lambda_{g}(\bx)\}$.\\
\noindent 4. \emph{Maximizers}: The set of maximizers of $\pi_\lambda$ is equal to that of $\pi$. Also, because $\pi_\lambda$ is continuously differentiable, $\nabla \log\pi_\lambda(\bx^*) = 0 $ implies that $\bx^*$ is a maximizer of $\pi$.\\
\noindent 5. \emph{Separability}: Assume that $\pi(\bx) = \prod_{i=1}^n f_i(x_i)$ and let ${f_i}_\lambda$ be the $\lambda$-Moreau approximation of the marginal density $f_i$. Then $\pi_\lambda(\bx) = \prod_{i=1}^n {f_i}_\lambda(x_i)$.\\
\noindent 6. \emph{Exponential tails}: Assume that $\pi \in \mathcal{E}(\beta,\gamma)$ with $\beta \geq 1$. Then $\pi_\lambda \in \mathcal{E}(\beta^\prime,\gamma^\prime)$ with $\beta^\prime = \min(\beta,2)$.

Properties 1--5 are extensions of well known results for Moreau--Yoshida envelope functions first established in \cite{Moreau1962}. Property 1 results from the fact that in the limit $\lambda \rightarrow 0$ the term $\exp{\left(-\|\bu - \bx\|^2/2\lambda\right)}$ tends to a Dirac delta function at $\bx$. Property 2 can be easily established by using the results of Section 2.3 of \citet{Combettes2005}. Property 3 follows from the fact that $\prox^\lambda_{g}(\bx)$ is the maximiser of $h(\bu) = \log\pi(\bu) - \|\bu - \bx\|^2/2\lambda$ and therefore $0 \in \partial h\{\prox^\lambda_{g}(\bx)\}$ \citep[Lemma 2.5]{Combettes2005}. Property 4 follows from Properties 2 and 3: if $\bx^*$ is a maximiser of $\pi_\lambda$ then from Property 2, $\prox^\lambda_{\pi}(\bx^*) = \bx^*$, and from Property 3, $0 \in \partial\log\pi(\bx^*)$. Then, Fermat's rule, generalised to subdifferentials, together with the fact that $\pi$ is log-concave implies that $\bx^*$ is a maximiser of $\pi$. Property 5 results from the fact that the proximity mapping of the separable sum $g(\bx) = \sum_{i=1}^n \log f_i(x_i)$ is given by $\{\prox^\lambda_{\log f_1}(x_1),\ldots,\prox^\lambda_{\log f_n}(x_n)\}$ \citep[Ch. 2]{BoydBook}. Finally, to establish Property 6 we use \eqref{MoreauApprox} and \eqref{classE} and note that for $\bx$ sufficiently large, $\pi_\lambda$ has exponentially decreasing tails with exponent $\beta^\prime = \beta$ if $\beta \in [1,2]$ and $\beta^\prime = 2$ if $\beta > 2$ (distributions with $\beta < 1$ are not log-concave).

To illustrate these definitions, Fig. \ref{fig:Moreau} depicts the Moreau approximations of four distributions that are log-concave: the Laplace distribution $\pi(x) \propto \exp{\left(-|x|\right)}$, the Gaussian distribution $\pi(x) \propto \exp{\left(-x^2\right)}$, the quartic or fourth-order polynomial distribution $\pi(x) \propto \exp{\left(-x^4\right)}$, and the uniform distribution $\pi(x) \propto \boldsymbol{1}(\bx)_{[-1,1]}$. We observe that the approximations are smooth, converge to $\pi$ as $\lambda$ decreases, and have the same maximisers as the true densities, as described by Properties 1, 2 and 4. We also observe that for densities with lighter-than-Gaussian tails the Moreau approximation mimics the true density around the mode but has Gaussian tails, as described by Property 6. 

\begin{figure}[h!]
\begin{minipage}[l1]{.5\linewidth}
  \centering
     \centerline{\includegraphics[width=8.0cm]{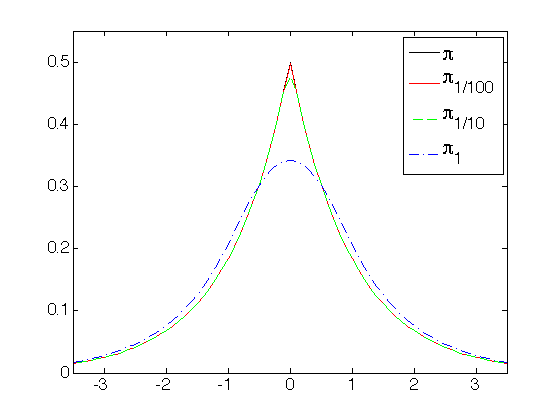}}
  \small{(a) $\pi(x) \propto \exp{\left(-|x|\right)}$}
\end{minipage}
\begin{minipage}[l2]{.5\linewidth}
  \centering
   \centerline{\includegraphics[width=8.0cm]{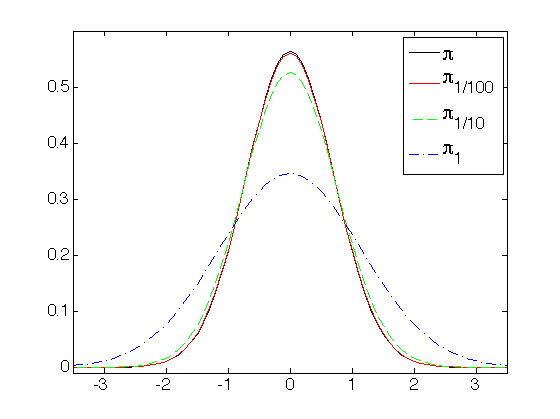}}
    \small{(b) $\pi(x) \propto \exp{\left(-x^2\right)}$}
\end{minipage}
\begin{minipage}[l1]{.5\linewidth}
  \centering
     \centerline{\includegraphics[width=8.0cm]{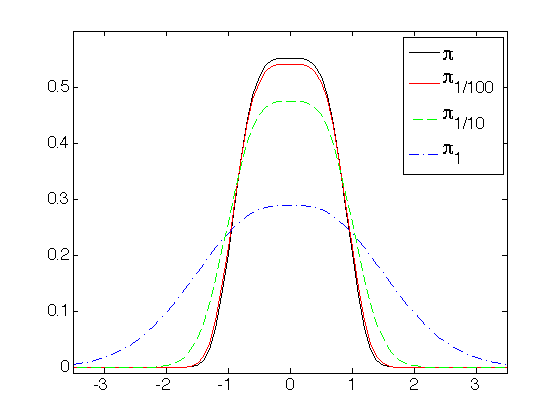}}
  \small{(c) $\pi(x) \propto \exp{\left(-x^4\right)}$}
\end{minipage}
\begin{minipage}[l2]{.5\linewidth}
  \centering
   \centerline{\includegraphics[width=8.0cm]{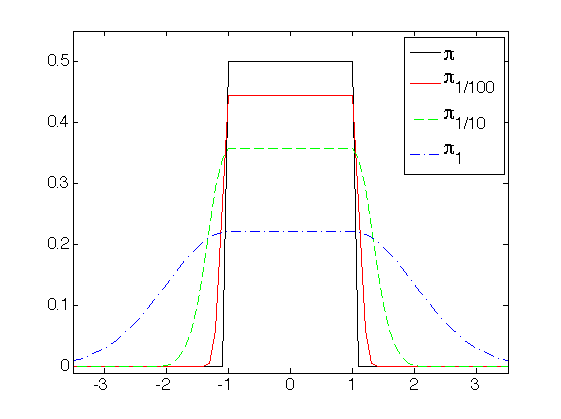}}
    \small{(d) $\pi(x) \propto \boldsymbol{1}(\bx)_{[-1,1]}$}
\end{minipage}
\caption{\small{Density plots for the Laplace (a), Gaussian (b), quartic (c) and uniform (d) distributions (solid black), and their Moreau approximations \eqref{MoreauApprox} for $\lambda= 1, 0.1, 0.01$ (dashed blue and green, and solid red).}} \label{fig:Moreau}
\end{figure}

As mentioned previously, the methods proposed in this paper are useful for models that have proximity mappings which are easy to evaluate or to approximate numerically (see Section \ref{sss:ie3} for more details). This is the case for many statistical models used in high-dimensional data analysis, where statistical inference is often conducted using convex optimisation algorithms that also require computing proximity mappings (see \citet{Figueiredo2011, Candes2009, Recht, Chandrasekaran2012} for examples in image restoration, compressive sensing, low-rank matrix recovery and graphical model selection). For more details about the evaluation of these mappings, their properties, and lists of functions with known mappings please see \citet{CombettesBook}, \citet{Combettes2011} and \citet[Ch. 6]{BoydBook}. A library with MATLAB implementations of frequently used proximity mappings is available on \url{https://github.com/cvxgrp/proximal}. 

\subsection{Langevin Markov chain Monte Carlo}\label{ss:ULAMALA}
The sampling method presented in this paper is derived from the Langevin diffusion process and is related to other Langevin MCMC algorithms that we briefly recall below. 

Suppose that $\pi$ is everywhere non-zero and differentiable so that $\nabla\log\pi$ is well defined. Then let $W$ be the $n$-dimensional Brownian motion and consider a Langevin diffusion process $\{Y(t): 0 \leq t \leq T\}$ on $\mathbb{R}^n$ that has $\pi$ as stationary distribution. Such process is defined as the solution to the stochastic differential equation
\begin{equation}\label{diffusion}
dY(t) = \frac{1}{2}\nabla\log\pi\{Y(t)\}dt + dW(t), \quad Y(0) = y_0.
\end{equation}
Under appropriate stability conditions, $Y(t)$ converges in distribution to $\pi$ and is therefore potentially interesting for simulating from $\pi$. Unfortunately, direct simulation from $Y(t)$ is only possible in very specific cases. A more general solution is to consider a discrete-time approximation of the Langevin diffusion process with step-size $\delta$. For computational reasons a forward Euler approximation is typically used, resulting in the so-called ULA
\begin{equation}\label{ULA}
\textrm{ULA}:\,\quad L^{(m+1)} = L^{(m)} + \frac{\delta}{2}\nabla\log\pi\{L^{(m)}\} + \sqrt{\delta} Z^{(m)}, \quad Z^{(m)} \sim \mathcal{N}(0,\mathbb{I}_n)
\end{equation}
where the parameter $\delta$ controls the discrete-time increment as well as the variance of the Gaussian perturbation $Z^{(m)}$. Under certain conditions on $\pi$ and $\delta$, ULA produces a good approximation of $Y(t)$ and converges to an ergodic measure which is close to $\pi$. In MALA this approximation error is corrected by introducing a  rejection step that guarantees convergence to the correct target density $\pi$ \citep{Roberts1996b}.
 
It is well known that MALA can be a very efficient sampling method, particularly in high-dimensional problems. However, it is also known that for certain classes of target densities ULA is transient and as a result MALA is not geometrically ergodic \citep{Roberts1996b}. Geometric ergodicity is important theoretically to guarantee the existence of a central limit theorem for the chains and practically because sub-geometric algorithms often fail to explore the parameter space properly. Another limitation of MALA and HMC methods is that they require $\pi \in \mathcal{C}^1$. This limits their applicability in many popular image processing and machine models that are not smooth.
 
In the following section we present a new MALA method that use proximity mappings and Moreau approximations to capture the log-concavity of the target density and construct chains with significantly better geometric convergence properties. We emphasise at this point that this is not the first work to consider modifications of MALA with better geometric convergence properties. For example, \citet{Roberts1996b} suggested using MALA with a truncated gradient to retain the efficiency of the Langevin proposal near the density's mode and add robustness in the tails, though we have found this approach to be difficult to implement practically (this is illustrated in Section \ref{sss:ie2}). Also, \citet{Casella2010} recently proposed three variations of MALA based on implicit discretisation schemes that are geometrically ergodic for one-dimensional distributions with super-exponential tails. For certain one-dimensional densities the methods presented in this paper are closely related to the partially implicit schemes of \citet{Casella2010}. Manifold MALA \citep{Girolami2011} is also geometrically ergodic for a wide range of tail behaviours if $\delta$ is sufficiently small \citep{Krys2011}. \black

\section{Proximal MCMC}\label{PPMCMC}
\subsection{Proximal unadjusted Langevin algorithm}\label{ss:pula}
This section presents a proximal Metropolis-adjusted Langevin algorithm (P-MALA) that exploits convex analysis to sample efficiently from log-concave densities $\pi$ of the form \eqref{pi}. In order to define this algorithm we first introduce the \emph{proximal unadjusted Langevin algorithm} (P-ULA) that generates samples approximately distributed according to $\pi$, and that will be used as proposal mechanism in P-MALA. We establish that P-ULA is geometrically ergodic in many cases for which ULA is transient or explosive and that P-MALA inherits these favourable properties, converging geometrically fast in many cases in which MALA does not.

A key element of this paper is to first approximate the Langevin diffusion $Y(t)$ with an auxiliary diffusion $Y_\lambda(t)$ that has invariant measure $\pi_\lambda$, defined by the stochastic differential equation \eqref{diffusion} with $\pi$ replaced by its $\lambda$-Moreau approximation \eqref{MoreauApprox}. The regularity properties of $\pi_\lambda$ will lead to discrete approximations with favourable stability and convergence qualities. We wish to use $Y_\lambda(t)$ to simulate from $\pi_\lambda$, which we can make arbitrarily close to $\pi$ by selecting a small value of $\lambda$. Direct simulation from $Y_\lambda(t)$ is typically infeasible and we thus consider the forward Euler approximation \eqref{ULA} for $Y_\lambda(t)$,
\begin{equation}\label{langevinDiscrete}
Y^{(m+1)} = Y^{(m)} + \frac{\delta}{2}\nabla\log\pi_\lambda\{Y^{(m)}\} + \sqrt{\delta} Z^{(m)}, \quad Z^{(m)} \sim \mathcal{N}(0,\mathbb{I}_n).
\end{equation}
From Property 2 we obtain that \eqref{langevinDiscrete} is equal to
\begin{equation}\label{langevinDiscrete2}
Y^{(m+1)} = \left(1-\frac{\delta}{2\lambda}\right)Y^{(m)} + \frac{\delta}{2\lambda}\prox^\lambda_g\{Y^{(m)}\} + \sqrt{\delta} Z^{(m)}, \quad Z^{(m)} \sim \mathcal{N}(0,\mathbb{I}_n).
\end{equation}
This Markov chain has two interpretations that provide insight on how to select an optimal value for $\lambda$. First, \eqref{langevinDiscrete2} is a discrete approximation of a Langevin diffusion with invariant measure $\pi_\lambda$, and since we are interested is simulating from $\pi$, we should set $\lambda$ to as small a value as possible to bring $\pi_\lambda$ close to $\pi$. Second, from a convex optimisation viewpoint, \eqref{langevinDiscrete2} coincides with a \emph{relaxed proximal point} iteration to maximise $\log\pi$ with relaxation parameter $\delta/2\lambda$, plus a stochastic perturbation given by $\sqrt{\delta}Z$ \citep{Rockafellar}. According to this second interpretation $\lambda$ should not be smaller than $\delta/2$, as this could lead to an unstable proximal point update that is expansive and therefore to an explosive Markov chain. We therefore define the optimal $\lambda$ as the smallest value within the range of stable values $[\delta/2,\infty)$. Setting $\lambda = \delta/2$ we obtain the P-ULA Markov chain
\begin{equation}\label{langevinDiscrete3}
\textrm{P-ULA}:\,\quad Y^{(m+1)} = \prox^{\delta/2}_{g}\{Y^{(m)}\} + \sqrt{\delta} Z^{(m)}, \quad Z^{(m)} \sim \mathcal{N}(0,\mathbb{I}_n).
\end{equation}

We now study the convergence properties of P-ULA. In a manner akin to \cite{Roberts1996b}, we study geometric convergence for the case where $\pi$ is one-dimensional and we illustrate our results on the class $\mathcal{E}(\beta,\gamma)$. Extensions to high-dimensional models of the form $\pi(\bx) = \prod_{i=1}^n f_i(x_i)$ are possible by using Property 5, and to high-dimensional densities $\pi \in \mathcal{C}^\infty$ with Lipschitz gradients by using Theorem 7.1 of \citet{Mattingly2002}.

\begin{theo}{\label{Theo1}
\textit{Suppose that $\pi$ is one-dimensional and that \eqref{pi} holds. For some fixed $d >0$, let
$$
S_d^+ = \lim_{x\rightarrow\infty} \{\prox^{\delta/2}_{g}(x) - x\}x^{-d}, \quad S_d^- = \lim_{x\rightarrow-\infty} \{\prox^{\delta/2}_{g}(x) - x\}|x|^{-d}.
$$
Then $\textrm{P-ULA}$ is geometrically ergodic if for some $d \in [0,1]$ both $S_d^+ < 0$ and $S_d^- > 0$ exist.
}}
\end{theo}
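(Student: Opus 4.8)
The plan is to establish geometric ergodicity through the classical Foster--Lyapunov route used by \citet{Roberts1996b}: show that the P-ULA chain \eqref{langevinDiscrete3} is $\psi$-irreducible, aperiodic and has small compact sets, and then exhibit a geometric drift condition $\mathbb{E}[V(Y^{(m+1)}) \mid Y^{(m)}=x] \leq \rho\, V(x) + b\,\ind_C(x)$ for some $\rho \in (0,1)$, $b<\infty$, a compact set $C$, and a suitable test function $V$. With both ingredients in hand, geometric ergodicity follows from the standard geometric drift theorem.

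The first ingredient is essentially immediate. Writing $p(x) = \prox^{\delta/2}_g(x)$, the one-step transition density of \eqref{langevinDiscrete3} is the Gaussian kernel $(2\pi\delta)^{-1/2}\exp\{-(y-p(x))^2/2\delta\}$, which is everywhere strictly positive and jointly continuous in $(x,y)$. Consequently the chain is $\psi$-irreducible with respect to Lebesgue measure, strongly aperiodic, Feller, and every compact set is small. This reduces the whole problem to verifying a tail drift estimate.

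For the drift I would take $V(x)=\exp(s|x|)$ for a constant $s>0$ to be fixed later, and analyse
$$
\frac{\mathbb{E}[V(Y^{(m+1)}) \mid Y^{(m)}=x]}{V(x)} = \exp(-s|x|)\,\mathbb{E}\!\left[\exp\{s\,|p(x)+\sqrt{\delta}\,Z|\}\right],\quad Z\sim\mathcal{N}(0,1),
$$
as $x\to+\infty$, the limit $x\to-\infty$ being symmetric and handled via $S_d^-$. Because $\prox^{\delta/2}_g$ is firmly non-expansive (hence monotone and $1$-Lipschitz) with the mode $x^*$ of $\pi$ as a fixed point, for large $x$ one has $p(x)\in[x^*,x]$, so that $p(x)+\sqrt\delta Z>0$ with overwhelming probability; splitting the expectation at $0$ shows the sign-crossing event contributes only an $O(\exp(-cx^2))$ term, and the dominant part gives $\mathbb{E}[\exp\{s|p(x)+\sqrt\delta Z|\}] = \exp\{s\,p(x)+s^2\delta/2\}(1+o(1))$. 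Thus the ratio behaves like $\exp\{s(p(x)-x)+s^2\delta/2\}$, and by hypothesis $p(x)-x = S_d^+\,x^{d}(1+o(1))$ with $S_d^+<0$. For $d\in(0,1]$ the exponent tends to $-\infty$ for every $s>0$, so the limsup of the ratio is $0$; for $d=0$ it tends to $s(S_0^+ + s\delta/2)$, which is strictly negative once $s$ is taken small enough. In every case the limsup is strictly below $1$, and the symmetric estimate using $S_d^->0$ controls the left tail.

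The main obstacle is precisely this tail estimate: one must control the asymptotics of $\mathbb{E}[\exp\{s|p(x)+\sqrt\delta Z|\}]$ carefully enough to obtain a strict inequality $\limsup_{|x|\to\infty}\mathbb{E}[V(Y^{(m+1)})\mid x]/V(x)<1$, keeping track of the $o(1)$ remainder in $p(x)-x\sim S_d^\pm|x|^d$, the Gaussian moment factor $\exp(s^2\delta/2)$, and the negligible sign-crossing contribution (the degenerate sub-case $p(x)=o(x)$ is even easier, since then the numerator is already $o(V(x))$). Once this strict bound holds off a large compact set $C$, boundedness of the expectation on $C$ by continuity and compactness furnishes the constant $b$, and the drift theorem concludes the argument. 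Finally I would illustrate the criterion on the class $\mathcal{E}(\beta,\gamma)$ by computing the large-$|x|$ asymptotics of $\prox^{\delta/2}_g$ for $\pi(x)\propto\exp(-\gamma|x|^\beta)$ and reading off the admissible exponents $d$.
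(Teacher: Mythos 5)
Your proof is correct and rests on the same Foster--Lyapunov mechanism as the paper's, but the packaging is genuinely different: the paper does not verify a drift condition by hand. After establishing $\mu^{Leb}$-irreducibility, aperiodicity and smallness of compact sets exactly as you do, the paper invokes Property 2, $\nabla\log\pi_{\delta/2}(x)=2\{\prox^{\delta/2}_{g}(x)-x\}/\delta$, to recognise P-ULA as the ordinary ULA targeting the smoothed density $\pi_{\delta/2}$; the hypotheses on $S_d^{+}$ and $S_d^{-}$ then become verbatim the drift conditions of Theorem 3.1 of \citet{Roberts1996b}, with part (a) covering $d\in[0,1)$ and part (b) covering $d=1$, where the at-most-Gaussian tails of $\pi_{\delta/2}$ guarantee $S_1^{+}\in(-1,0)$ and $S_1^{-}\in(0,1)$. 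Your direct computation with $V(x)=e^{s|x|}$ essentially re-proves that cited theorem, and it correctly isolates the one place where something could fail at $d=1$, namely overshooting ($p(x)\to-\infty$ as $x\to+\infty$, which would make the sign-crossing term dominant and the chain oscillatory); you exclude this via firm non-expansiveness and the fixed point at the mode, giving $p(x)\in[x^{*},x]$, which is exactly the structural fact the paper instead extracts from the regularity of the Moreau approximation. What the paper's route buys is brevity and the conceptual identification of P-ULA with ULA for $\pi_{\delta/2}$; what yours buys is a self-contained argument, at the cost of having to track the uniformity of the $o(1)$ remainders and the boundedness of $\mathbb{E}[V(Y^{(m+1)})\mid Y^{(m)}=x]$ on the compact set $C$, both of which are routine but should be stated.
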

\noindent 
\begin{proof}
The proof follows from the fact that $\nabla\log\pi_{\delta/2}$ is continuous and $\textrm{P-ULA}$ is $\mu^{Leb}$-irreducible and weak Feller, and hence all compact sets are small \citep[Ch. 6]{MeynTweedieBook}. Then, using Property 2, the conditions on $S_d^+$ and $S_d^-$ are equivalent to the conditions of part (a) of Theorem 3.1 of \citet{Roberts1996b} establishing that $\textrm{P-ULA}$ is geometrically ergodic for $d \in [0,1)$. For $d = 1$ we proceed similarly to Property 6 and note that for approximations $\pi_{\delta/2}$ with Gaussian tails we have that $S_1^+ \in (-1,0)$ and $S_1^- \in (0,1)$, thus part (b) of Theorem 3.1 of \citet{Roberts1996b} applies. Finally, notice from Property 2 that the values of $d$, $S_d^+$ and $S_d^-$ are closely related to the tails of the approximation $\pi_{\delta/2}$, i.e., $\lim_{x\rightarrow\infty} \frac{\textrm{d}}{\textrm{d}x} \log\pi_{\delta/2}(x) =  S_d^+ x^d + o(|x|^d)$ and $\lim_{x\rightarrow-\infty} \frac{\textrm{d}}{\textrm{d}x}\log\pi_{\delta/2}(x) =  S_d^- x^d + o(|x|^d)$.
\end{proof}

Theorem \ref{Theo1} is most clearly illustrated when $\pi$ belongs to the class $\mathcal{E}(\beta,\gamma)$. Recall that ULA is not ergodic for if $\beta >2$ and only for $\delta$ sufficiently small if $\beta = 2$ \citep{Roberts1996b}.
\begin{coro}{\label{Theo2}
\textit{Assume that $\pi \in \mathcal{E}(\beta,\gamma)$ and that \eqref{pi} holds. Then $\textrm{P-ULA}$ is geometrically ergodic for all $\delta > 0$.}}
\end{coro}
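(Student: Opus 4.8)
The plan is to derive the corollary directly from Theorem~\ref{Theo1}: for each $\delta>0$ I must exhibit an exponent $d\in[0,1]$ with $S_d^+<0$ and $S_d^->0$. Since $\pi$ is log-concave, the tail exponent necessarily satisfies $\beta\geq 1$, so only the regimes $\beta\in[1,2]$ and $\beta>2$ need attention. The central computational device is the optimality characterisation of the proximity map from Property 3: writing $p_x=\prox^{\delta/2}_{g}(x)$ and using $g=\log\pi$ up to an additive constant, for $|p_x|$ large one has
\begin{equation*}
\prox^{\delta/2}_{g}(x)-x=\tfrac{\delta}{2}\,g'(p_x)=-\tfrac{\delta}{2}\gamma\beta\,|p_x|^{\beta-1}\operatorname{sgn}(p_x).
\end{equation*}
Because the proximity map is nondecreasing and $p_x$ solves $x=p_x+\tfrac{\delta}{2}\gamma\beta\,|p_x|^{\beta-1}\operatorname{sgn}(p_x)$, we have $p_x\to\pm\infty$ as $x\to\pm\infty$, so everything reduces to tracking the growth rate of $p_x$.

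In the regime $1\le\beta<2$ the linear term dominates the defining equation, giving $p_x/x\to 1$ and $|p_x|^{\beta-1}\sim|x|^{\beta-1}$. I would then choose $d=\beta-1\in[0,1)$ and read off $S_{\beta-1}^+=-\tfrac{\delta}{2}\gamma\beta<0$ and $S_{\beta-1}^-=+\tfrac{\delta}{2}\gamma\beta>0$, which reduces in the Laplace case $\beta=1$ to the constant drift $S_0^\pm=\mp\tfrac{\delta}{2}\gamma$. The borderline $\beta=2$ is handled exactly, since the defining equation is then linear: $p_x=x/(1+\delta\gamma)$, so with $d=1$ one obtains $S_1^+=-\delta\gamma/(1+\delta\gamma)\in(-1,0)$ and $S_1^-=+\delta\gamma/(1+\delta\gamma)\in(0,1)$. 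In each case the hypotheses of Theorem~\ref{Theo1} hold for every $\delta>0$.

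For $\beta>2$ the penalty term is subordinate, so $x\sim\tfrac{\delta}{2}\gamma\beta\,|p_x|^{\beta-1}$ and therefore $p_x=O(|x|^{1/(\beta-1)})=o(|x|)$; this is precisely the mechanism of Property 6, by which $\pi_{\delta/2}$ acquires Gaussian tails ($\beta'=2$). Consequently $\prox^{\delta/2}_{g}(x)-x\sim-x$, and taking $d=1$ yields $S_1^+=-1<0$ and $S_1^-=+1>0$, so Theorem~\ref{Theo1} applies again. Since a valid $d\in[0,1]$ exists in every regime and for all $\delta>0$, the corollary follows.

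The main obstacle I anticipate is the rigorous asymptotic analysis of the implicit equation $x=p_x+\tfrac{\delta}{2}\gamma\beta\,|p_x|^{\beta-1}\operatorname{sgn}(p_x)$: establishing the monotone divergence of $p_x$ and its exact leading-order growth in each regime, and confirming that the tail form~\eqref{classE} suffices to control $g'(p_x)$ at the relevant scale (so that the bounded region $|x|\le u$ does not perturb the limits defining $S_d^\pm$). A secondary delicate point is the borderline value $S_1^\pm=\mp 1$ obtained when $\beta>2$: although it satisfies the strict-sign hypotheses of Theorem~\ref{Theo1} as stated, one should check that it remains within the range covered by part (b) of Theorem 3.1 of \citet{Roberts1996b} invoked in that proof.
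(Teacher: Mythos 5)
Your proof is correct and follows essentially the same route as the paper, which simply invokes Property 6 to get $\pi_{\delta/2}\in\mathcal{E}(\beta',\gamma')$ with $\beta'=\min(\beta,2)$ and then checks the conditions of Theorem~\ref{Theo1} with $d=\beta'-1$ — exactly the exponents you choose — your version merely making the asymptotics of the implicit equation for $\prox^{\delta/2}_{g}(x)$ explicit. The borderline value $S_1^{\pm}=\mp 1$ you flag for $\beta>2$ is harmless: it satisfies the strict sign conditions of Theorem~\ref{Theo1} as stated, and it corresponds to a drift $x\mapsto\prox^{\delta/2}_{g}(x)=o(|x|)$ that is strictly contractive in the sense required by part (b) of Theorem 3.1 of \citet{Roberts1996b}.
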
 
\noindent This result follows from the fact that \eqref{pi} implies $\beta \geq 1$ (distributions belonging to $\mathcal{E}(\beta,\gamma)$ with $\beta < 1$ are not log-concave), which in turn implies that $\pi_{\delta/2} \in \mathcal{E}(\beta^\prime,\gamma^\prime)$ with $\beta^\prime = \min(\beta,2)$ and some $\gamma^\prime > 0$. The geometric convergence of $\textrm{P-ULA}$ is then established by checking that for $d = \beta^\prime - 1$ the limits $S_d^+$ and $S_d$ exist and verify the conditions of Theorem \ref{Theo1} for all $\delta >0$. 

The results presented above establish that under certain conditions on $\pi$ $\textrm{P-ULA}$ converges geometrically to some unknown ergodic measure. To determine if this stationary measure is a good approximation of $\pi$, and thus if P-ULA is a good proposal for a  algorithm, we consider the more general question of how well $\textrm{P-ULA}$ approximates the time-continuous diffusion $Y(t)$ as a function of $\delta$ [we consider strong mean-square convergence to $Y(t)$ in the sense of \cite{Higham2003}, which also implies the convergence of P-ULA's ergodic measure to $\pi$]. 

\begin{theo}{\label{Theo4}
\textit{Suppose that $\pi \in \mathcal{C}^2$ and that \eqref{pi} holds. Then there exists a continuous-extension $\bar{Y}(t)$ of the $\textrm{P-ULA}$ chain for which
$$
\lim_{\delta \rightarrow 0} \mathbb{E}\left( \sup_{0 \leq t \leq T}  \left|\bar{Y}(t)-Y(t)\right|^2\right) = 0
$$
where $Y(t)$ is the Langevin diffusion \eqref{diffusion} with ergodic measure $\pi$. Moreover, if $\nabla\log\pi$ is polynomial in $\bx$, then $\textrm{P-ULA}$ converges strongly to $Y(t)$ at optimal rate; that is,
$$
\mathbb{E}\left( \sup_{0 \leq t \leq T}  \left|\bar{Y}(t)-Y(t)\right|^2\right) = \textrm{O}(\delta).
$$
}}
\end{theo}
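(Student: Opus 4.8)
The plan is to recognise P-ULA as the Euler--Maruyama (EM) discretisation of an auxiliary, $\delta$-dependent diffusion and then apply the strong-convergence theory of \citet{Higham2003} for SDEs with locally Lipschitz, dissipative drift, treating the $\delta$-dependence of the drift as a vanishing perturbation. First I would rewrite P-ULA in drift form: by Property 2, \eqref{langevinDiscrete3} equals $Y^{(m+1)} = Y^{(m)} + \frac{\delta}{2}\nabla\log\pi_{\delta/2}(Y^{(m)}) + \sqrt{\delta}Z^{(m)}$, so P-ULA is exactly the EM scheme for $dY_{\delta/2} = \frac12\nabla\log\pi_{\delta/2}(Y_{\delta/2})\,dt + dW$. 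I would define the continuous extension as the standard EM interpolant on $[t_m,t_{m+1}]$, $t_m = m\delta$, by $\bar{Y}(t) = Y^{(m)} + \frac{t-t_m}{2}\nabla\log\pi_{\delta/2}(Y^{(m)}) + \{W(t)-W(t_m)\}$, so that $\bar Y(t_m)=Y^{(m)}$ and, writing $\hat Y(t)$ for the value of $\bar Y$ at the preceding grid point, $d\bar Y(t) = \frac12\nabla\log\pi_{\delta/2}(\hat Y(t))\,dt + dW(t)$, with $W$ the \emph{same} Brownian motion driving \eqref{diffusion}.

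The crucial simplification is that the noise is additive, so in $e(t)=\bar Y(t)-Y(t)$ (coupled through $W$, with $\bar Y(0)=Y(0)$) the Brownian terms cancel and $e$ has finite variation: $e(t) = \frac12\int_0^t\{\nabla\log\pi_{\delta/2}(\hat Y(s)) - \nabla\log\pi(Y(s))\}\,ds$. I would split the integrand into the Moreau drift error $\nabla\log\pi_{\delta/2}(\hat Y) - \nabla\log\pi(\hat Y)$, the discretisation gap $\nabla\log\pi(\hat Y) - \nabla\log\pi(\bar Y)$, and the genuine error $\nabla\log\pi(\bar Y) - \nabla\log\pi(Y)$. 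The Moreau error is controlled by the fixed-point identity of Property 3, $\prox^{\delta/2}_g(x) - x = \frac{\delta}{2}\nabla\log\pi\{\prox^{\delta/2}_g(x)\}$, which under $\pi\in\mathcal C^2$ gives $|\nabla\log\pi_{\delta/2}(x) - \nabla\log\pi(x)| = |\nabla\log\pi\{\prox^{\delta/2}_g(x)\} - \nabla\log\pi(x)| = \mathrm{O}(\delta)$, with a constant governed by the local modulus of $\nabla^2\log\pi$ that grows polynomially in $x$ when $\nabla\log\pi$ is polynomial. The discretisation gap is the standard EM bound $\mathbb E|\hat Y(s)-\bar Y(s)|^2 = \mathrm O(\delta)$, dominated by the Brownian increment, and the genuine error is bounded by a local Lipschitz constant times $|e(s)|$.

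To make these local estimates uniform I would establish moment bounds $\sup_{s\le T}\mathbb E|Y(s)|^p$ and $\sup_m \mathbb E|Y^{(m)}|^p$ that are finite uniformly in $\delta$. This is where log-concavity enters decisively: concavity of $g$ makes both $\nabla\log\pi$ and $\nabla\log\pi_{\delta/2}$ monotone, i.e.\ one-sided Lipschitz with non-positive constant, and together with the coercivity $g(\bx)\to-\infty$ this dissipativity yields the uniform moments via the arguments of \citet{Higham2003}. With the moments in hand I would apply Cauchy--Schwarz to $|e(t)|^2\le \frac{T}{4}\int_0^t|\cdots|^2\,ds$, take the supremum in $t$ and expectations, and obtain $\mathbb E\sup_{s\le t}|e(s)|^2 \le C\delta + C\int_0^t \mathbb E\sup_{r\le s}|e(r)|^2\,ds$ after absorbing the polynomial-growth Lipschitz constants into the moment bounds; Gronwall's inequality then gives $\mathbb E\sup_{0\le t\le T}|\bar Y(t)-Y(t)|^2 = \mathrm O(\delta)$ in the polynomial-drift case. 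For general $\pi\in\mathcal C^2$, where the Lipschitz and Moreau-error constants need not grow polynomially, I would instead invoke the convergence-without-rate theorem of \citet{Higham2003}, using a stopping-time localisation on the moments and Property 1 (so the Moreau drift error vanishes pointwise, hence in mean by dominated convergence) to conclude $\mathbb E\sup_{0\le t\le T}|\bar Y(t)-Y(t)|^2\to 0$.

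I expect the main obstacle to be the uniform-in-$\delta$ moment bounds together with the control of the local Lipschitz constants: since the drift is only locally Lipschitz with possibly superlinear growth, the naive Gronwall constant can blow up, and one must lean on the dissipativity coming from log-concavity (and, in the general case, the stopping-time localisation of \citet{Higham2003}) to keep the estimates uniform. A secondary difficulty is that the drift itself depends on $\delta$, so the scheme is not EM for a single fixed SDE; this is handled by treating $\frac12\{\nabla\log\pi_{\delta/2}-\nabla\log\pi\}$ as a vanishing drift perturbation and quantifying it through the Property 3 fixed-point identity.
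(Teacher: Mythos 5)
Your argument is essentially correct, but it takes a genuinely different route from the paper. You view P-ULA as the \emph{forward} Euler--Maruyama scheme for the auxiliary diffusion with drift $\tfrac12\nabla\log\pi_{\delta/2}$ and then control $\tfrac12\{\nabla\log\pi_{\delta/2}-\nabla\log\pi\}$ as a vanishing drift perturbation via the fixed-point identity of Property 3. The paper instead uses that same identity once, at the outset, to observe that $Y^{+}=\prox^{\delta/2}_{g}\{Y^{(m)}\}$ is by definition the solution of the implicit equation $Y^{+}=Y^{(m)}+\tfrac{\delta}{2}\nabla\log\pi(Y^{+})$, so that P-ULA is \emph{exactly} the split-step backward Euler scheme for the original diffusion \eqref{diffusion}; the two claims then follow by citing Theorems 3.3 and 4.7 of \citet{Higham2003} directly, with \eqref{pi} (log-concavity) supplying the one-sided Lipschitz/dissipativity hypothesis. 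Your route re-derives by hand what that identification gives for free, and it has one delicate point you should flag more explicitly: the Cauchy--Schwarz step $|e(t)|^2\le \tfrac{T}{4}\int_0^t|\cdots|^2\,ds$ discards the monotonicity of the drift, so the Gronwall constant involves the \emph{local} Lipschitz constant of $\nabla\log\pi$, which grows polynomially; closing the estimate then requires uniform-in-$\delta$ bounds on all relevant moments of both $Y$ and the chain (the latter do hold here, by non-expansiveness of $\prox^{\delta/2}_{g}$ plus the Gaussian-tail behaviour of $\pi_{\delta/2}$, but this must be proved rather than asserted). The split-step identification avoids this entirely because the implicit step is handled through the inner product $\langle e, \nabla\log\pi(x)-\nabla\log\pi(y)\rangle\le 0$. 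What your approach buys in exchange is a more self-contained and transparent error decomposition that isolates the Moreau approximation error as an explicit $\mathrm{O}(\delta)$ term, which could be useful if one wanted quantitative control of the bias of the unadjusted chain rather than just strong convergence of the path.
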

\begin{proof} To prove the first result we use Property 3 to express P-ULA as a \textit{split-step backward Euler} approximation of $Y(t)$ (i.e., $Y^{(m+1)} = Y^{+} + \sqrt{\delta} W^{(m)}$ with $Y^{+} = \frac{\delta}{2}\nabla\log\pi\left(Y^{+}\right) + Y^{(m)}$), and apply Theorem 3.3 of \cite{Higham2003}, where we note that assumption \eqref{pi} implies condition 3.1 of \cite{Higham2003}. The second result follows from  Theorem 4.7 of \cite{Higham2003}.\end{proof}

\subsection{Proximal Metropolis-adjusted Langevin algorithm}\label{ss:pmala}
\subsubsection{ correction}
As explained previously, P-ULA simulates samples from an approximation of $\pi$. A natural strategy to correct this approximation error is to supplement P-ULA with a Metropolis--Hasting accept--reject step guaranteeing convergence to $\pi$, leading to a \emph{proximal Metropolis-adjusted Langevin algorithm} (P-MALA). This is a  chain $X^{(m)}$ that uses $\textrm{P-ULA}$ as proposal. Precisely, given $X^{(m)}$, a candidate $Y^{*}$ is generated by using one $\textrm{P-ULA}$ transition
\begin{equation}\label{PULA_Proposal}
Y^{*} | X^{(m)} \sim \mathcal{N}\left[\prox^{\delta/2}_{g}\{X^{(m)}\}, \delta\mathbb{I}_n\right].
\end{equation}
We accept this candidate and set $X^{(m)} = Y^{*}$ with probability
\begin{equation}\label{ratio2}
\textrm{r}\{X^{(m)},Y^*\} = \min\left[1,\frac{\pi(Y^*)}{\pi\{X^{(m)}\}}\frac{q\{X^{(m)}|Y^*\}}{q\{Y^*|X^{(m)}\}}\right]
\end{equation}
where $q\{Y^*|X^{(m)}\} = p_{\mathcal{N}}\left[Y^*|\prox^\lambda_g\{X^{(m)}\},\delta\mathbb{I}_n\right]$ is the $\textrm{P-ULA}$ transition kernel given by \eqref{langevinDiscrete3}. Otherwise, with probability $1-\textrm{r}\{X^{(m)},Y^*\}$, we reject the proposition and set $X^{(m+1)} = X^{(m)}$. By the Hastings construction, the P-MALA chain converges to $\pi$ in the total-variation norm [this follows from the facts that the chain is irreducible, aperiodic and $\pi$-invariant \citep[ch. 7]{Robert}]. Note that though \eqref{ratio2} involves two proximity mappings, we only need to evaluate $\prox^{\delta/2}_{g}(X^{*})$ at each iteration since $\prox^{\delta/2}_{g}\{X^{(m)}\}$ is known from the algorithm's previous iteration.

\subsubsection{Convergence properties}
We provide two alternative sets of conditions for the geometric ergodicity of $\textrm{P-MALA}$ and illustrate our results on the case where $\pi$ belongs to the class $\mathcal{E}(\beta,\gamma)$, which we use as benchmark for comparison with other MALAs.
\begin{theo}{\label{Theo5}
\textit{Suppose that \eqref{pi} holds. Let $A(\bx) = \{\bu: \textrm{r}(\bx,\bu) = 1 \}$ be the acceptance region of $\textrm{P-MALA}$ from point $\bx$, and $I(\bx) = \{\bu: \|\bx\| \geq \|\bu\| \}$ the region of points interior to $\bx$. Suppose that $A$ converges inwards in $q$, i.e., 
$$
\lim_{\|\bx\|\rightarrow\infty} \int_{A(\bx)\Delta I(\bx)} q(\bu|\bx)\mathrm{d}\bu = 0
$$
where $A(\bx)\Delta I(\bx)$ denotes the symmetric difference  $A(\bx) \cup I(\bx) \setminus A(\bx) \cap I(\bx)$. Then $\textrm{P-MALA}$ is geometrically ergodic.
}}
\end{theo}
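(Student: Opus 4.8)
The plan is to obtain geometric ergodicity from a Foster--Lyapunov geometric drift condition together with a minorisation on small sets, in the standard Meyn--Tweedie framework \citep[Ch. 15]{MeynTweedieBook} as specialised to Hastings chains by \citet{Roberts1996b}. The regularity preliminaries are dispatched exactly as in Theorem \ref{Theo1}: the proposal density $q(\bu|\bx) = p_{\mathcal{N}}[\bu|\prox^{\delta/2}_g(\bx),\delta\mathbb{I}_n]$ is everywhere positive and jointly continuous, so $\textrm{P-MALA}$ is $\mu^{Leb}$-irreducible, aperiodic and weak Feller, whence every compact set is small. It then remains only to exhibit a drift function $V$ and a compact $C$ with $PV(\bx) \leq \rho V(\bx) + b\,\mathbf{1}_C(\bx)$ for some $\rho < 1$ and $b < \infty$.

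For the drift I would use a test function increasing in $\|\bx\|$, for instance $V(\bx) = \exp(s\|\bx\|)$ with $s>0$ small, chosen precisely so that the interior region $I(\bx)$ coincides with the sublevel set $\{\bu : V(\bu) \leq V(\bx)\}$. Writing $PV(\bx) - V(\bx) = \int \{V(\bu)-V(\bx)\}\,\textrm{r}(\bx,\bu)\,q(\bu|\bx)\,\mathrm{d}\bu$ and splitting the domain along $I(\bx)$ and $A(\bx)$, the contribution from $A(\bx)\cap I(\bx)$ is strictly negative (there $\textrm{r}=1$ and $V(\bu)\leq V(\bx)$) and furnishes the contraction, while all other terms live on $A(\bx)\Delta I(\bx)$ or on the outward region $I(\bx)^c$. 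The hypothesis enters through two channels. Since $g$ is concave, $\prox^{\delta/2}_g$ is firmly non-expansive and fixes the mode of $\pi$, so the proposal mean $\prox^{\delta/2}_g(\bx)$ lies no farther from the mode than $\bx$; this keeps a proposal mass bounded away from zero inside $I(\bx)$, i.e.\ $\liminf_{\|\bx\|\to\infty} q(I(\bx)|\bx) > 0$. The assumption that $A$ converges inwards in $q$ then gives $q\{I(\bx)\setminus A(\bx)\,|\,\bx\} \leq q\{A(\bx)\Delta I(\bx)\,|\,\bx\} \to 0$, so this inward mass is asymptotically \emph{accepted}, forcing $\liminf_{\|\bx\|\to\infty} q\{A(\bx)\cap I(\bx)\,|\,\bx\} > 0$ and hence a negative contribution of order $-V(\bx)$.

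The hard part will be controlling the outward tail. Because $V$ grows, even the vanishing $q$-mass on $A(\bx)\Delta I(\bx)$, or on accepted far-outward moves with $\|\bu\| \gg \|\bx\|$, could in principle make $\int_{I(\bx)^c} V(\bu)\,q(\bu|\bx)\,\mathrm{d}\bu$ swamp the $-V(\bx)$ gain; the delicate point is not merely that $q\{A\Delta I\,|\,\bx\}\to 0$ but that this mass, \emph{weighted by} $V$, stays negligible relative to $V(\bx)$. This is where the Gaussian form of the proposal is essential: its light tails dominate the exponential growth of $V$ provided $s$ is taken small relative to $\delta^{-1}$, so that by a Gaussian moment-generating-function estimate $\int V(\bu)\,q(\bu|\bx)\,\mathrm{d}\bu$ is comparable to $V\{\prox^{\delta/2}_g(\bx)\}$ times a bounded factor, and $V\{\prox^{\delta/2}_g(\bx)\}$ is itself strictly below $V(\bx)$ by the inward shift of the proximal map. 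Verifying this integrability uniformly in $\bx$, and checking that the partially-rejected outward region likewise contributes a factor strictly below one, is the technical core of the argument.

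Combining the strictly negative interior term with the uniformly small outward term yields $\limsup_{\|\bx\|\to\infty} PV(\bx)/V(\bx) < 1$, which is the geometric drift outside a sufficiently large compact set $C$. Together with the small-set property established above, the drift condition gives geometric ergodicity by the standard theorem \citep[Ch. 15]{MeynTweedieBook}. Equivalently, having verified the regularity hypotheses one may note that the stated condition is exactly the inwards-convergence assumption under which \citet{Roberts1996b} establish geometric ergodicity of multidimensional Hastings chains, and invoke that result directly.
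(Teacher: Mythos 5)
Your proposal is correct and, in substance, identical to the paper's argument: the paper establishes the inward shift of the proposal mean, $\|\prox^{\delta/2}_g(\bx)\| < \|\bx\|$, from the non-expansiveness of the proximity mapping of the concave $g$, and then combines this with the assumed inwards convergence of $A$ to invoke Theorem 4.1 of \citet{Roberts1996b} directly --- which is exactly the closing sentence of your proposal. The body of your write-up simply unpacks the drift-function proof of that cited theorem (exponential Lyapunov function, splitting along $A(\bx)$ and $I(\bx)$, Gaussian tail control of the outward term) rather than citing it, so it adds detail but not a genuinely different route.
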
 
\begin{proof} To prove this result we use Theorem 5.14 of \citet{CombettesBook} to show that if \eqref{pi} holds then, for any $\bx$, the mean candidate position $\prox^\lambda_g(\bx)$ verifies the inequality $\|\prox^\lambda_g(\bx)\| < \|\bx\|$. This result, together with the condition that $A$ converges inwards in $q$, implies that $\textrm{P-MALA}$ is geometrically ergodic \citep[Theorem 4.1]{Roberts1996b}.
\end{proof}
\begin{coro}{\label{Theo7}
\textit{Suppose that $\pi \in \mathcal{E}(\beta,\gamma)$ and that \eqref{pi} holds. Then P-MALA is geometrically ergodic for all $\delta > 0$.}}
\end{coro}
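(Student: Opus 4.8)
The plan is to deduce this from Theorem~\ref{Theo5}, mirroring the way Corollary~\ref{Theo2} follows from Theorem~\ref{Theo1}. Since \eqref{pi} is assumed, the only hypothesis of Theorem~\ref{Theo5} still to be checked is that the acceptance region $A$ converges inwards in $q$, i.e. that $\int_{A(x)\Delta I(x)} q(u|x)\,\mathrm{d}u \to 0$ as $|x|\to\infty$ (the attendant contraction $\|\prox^{\delta/2}_g(x)\| < \|x\|$ is furnished automatically by \eqref{pi} inside the proof of Theorem~\ref{Theo5}). Because $\mathcal{E}(\beta,\gamma)$ is one-dimensional and log-concavity forces $\beta\geq1$, Property 6 gives $\pi_{\delta/2}\in\mathcal{E}(\beta',\gamma')$ with $\beta'=\min(\beta,2)$, and Property 2 yields $\prox^{\delta/2}_g(x) - x = \tfrac{\delta}{2}\,\tfrac{\mathrm d}{\mathrm dx}\log\pi_{\delta/2}(x)$, so the tail behaviour of the proposal mean is completely determined by the Moreau envelope.

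Next I would analyse the two tail regimes separately, treating $x\to+\infty$ with the case $x\to-\infty$ being symmetric. When $\beta\in[1,2]$ the exponent is preserved, $\beta'=\beta$, and the correction $\prox^{\delta/2}_g(x)-x$ is of sublinear order $O(x^{\beta-1})$, so the proposal is centred just inside $x$. When $\beta>2$ the envelope has Gaussian tails, $\beta'=2$, and $\tfrac{\mathrm d}{\mathrm dx}\log\pi_{\delta/2}(x)$ is asymptotically linear with negative slope, so $\prox^{\delta/2}_g(x)\sim c\,x$ with $c\in(0,1)$ and the proposal is centred strictly inside $x$ by a margin that grows linearly. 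The crucial structural point is that, unlike the raw gradient $\nabla\log\pi(x)\sim-\beta\gamma x^{\beta-1}$ which overshoots for $\beta>2$ and drives ULA/MALA to transience, the Moreau-regularised drift is at worst linear, so the fixed-variance Gaussian proposal \eqref{PULA_Proposal} concentrates on inward moves in every regime.

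The main obstacle is the asymptotic control of the Metropolis ratio \eqref{ratio2} that is needed to match $A(x)$ against $I(x)$. Writing $\log r(x,u) = [g(u)-g(x)] + \tfrac{1}{2\delta}\{(u-\prox^{\delta/2}_g(x))^2 - (x-\prox^{\delta/2}_g(u))^2\}$ and expanding for $u$ in the effective support of $q(\cdot|x)$, i.e. $u=\prox^{\delta/2}_g(x)+O(1)$, one must show that $r(x,u)\to1$ on inward candidates while outward candidates are asymptotically rejected, so that both $A(x)\setminus I(x)$ and $I(x)\setminus A(x)$ carry vanishing proposal mass. When $\beta>2$ the linear contraction pushes the proposal centre a distance $\sim(1-c)|x|$ inside the boundary $|u|=|x|$, so a direct Gaussian-tail bound already gives $q(\{|u|\geq|x|\}\,|\,x)\to0$ and the matching is immediate. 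The delicate case is $\beta\in[1,2]$, where the margin $x-\prox^{\delta/2}_g(x)=O(x^{\beta-1})$ is only sublinear, and merely bounded at $\beta=1$: there the increment $g(u)-g(x)$ and the quadratic correction in $\log r$ must be balanced near the boundary to confirm that the Metropolis rule rejects the outward excursions of the proposal, exactly as in the proof of Theorem 4.1 of \citet{Roberts1996b}. Assembling the two regimes yields the inward-convergence condition, and Theorem~\ref{Theo5} then delivers geometric ergodicity for every $\delta>0$.
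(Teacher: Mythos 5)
Your proposal follows exactly the paper's route: the paper's justification is a one-line remark that the result ``simply consists of checking that if $\pi \in \mathcal{E}(\beta,\gamma)$ and \eqref{pi} holds then $A$ converges inwards in $q$'' so that Theorem~\ref{Theo5} applies, noting $\beta \geq 1$. Your write-up supplies the tail-regime analysis that the paper leaves implicit, and the details (the form of $\log r$, the sublinear drift for $\beta\in[1,2]$ versus the linear contraction for $\beta>2$ via Properties 2 and 6) are consistent with the paper's supporting results.
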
 
\noindent Proving this result simply consists of checking that if $\pi \in \mathcal{E}(\beta,\gamma)$ and \eqref{pi} holds then $A$ converges inwards in $q$ and therefore Theorem \ref{Theo5} applies, where we note that \eqref{pi} implies that $\beta \geq 1$. 

Notice from Corollary \ref{Theo7} that P-MALA has very robust stability and converge properties. For comparison, MALA is not geometrically ergodic for any $\pi \in \mathcal{E}(\beta,\gamma)$ with $\beta > 2$ \citep{Roberts1996b} and manifold MALA is geometrically ergodic for $\pi \in \mathcal{E}(\beta,\gamma)$ with $\beta \neq 1$ only if $\delta$ is sufficiently small \citep{Krys2011}. P-MALA inherits these robust convergence properties from P-ULA, or more precisely from the regularity properties of $\pi_{\delta/2}$ that guarantee that P-ULA is always stable and geometrically ergodic. In particular, that $\log \pi_{\delta/2}$ decays at mostly quadratically, that $\nabla \log \pi_{\delta/2}$ always exists and is Lipchitz continuous, and that the tails of $\pi_{\delta/2}$ broaden with $\delta$ such that $Y_{\delta/2}(t)$ is always within the stability range of a forward Euler approximation with time step $\delta$. 

Moreover, the convergence properties of $\textrm{P-MALA}$ can also be studied in the framework of Random-walk  algorithms with bounded drift \citep{Atachade2006}. 
\begin{theo}{\label{Theo6}
\textit{Suppose that $\pi \in \mathcal{C}^1$ and that \eqref{pi} holds. Assume that there exists $R > 0$ such that $\forall \bx \in \mathbb{R}^n, \|\bx - \prox^{\delta/2}{g}(\bx)\| < R $, and that $\pi$ verifies the conditions
$$
\lim_{\|\bx\| \rightarrow \infty} \frac{\bx}{\|\bx\|} \cdot \nabla\log\pi(\bx) = - \infty \quad\mbox{and}\quad \lim_{\|\bx\| \rightarrow \infty} \frac{\bx}{\|\bx\|} \cdot \frac{\nabla\pi(\bx)}{\|\nabla\pi(\bx)\|}<0.
$$
Then $\textrm{P-MALA}$ is geometrically ergodic.
}}
\end{theo}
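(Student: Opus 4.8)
The plan is to recognise P-MALA as a random-walk Metropolis algorithm with a uniformly bounded drift, and then to invoke the geometric ergodicity theory of \citet{Atachade2006}. The starting observation is that the proposal \eqref{PULA_Proposal} can be rewritten as $Y^* = \bx + b(\bx) + \sqrt{\delta}Z$ with $Z \sim \mathcal{N}(0,\mathbb{I}_n)$, where the drift is $b(\bx) = \prox^{\delta/2}_g(\bx) - \bx$. By hypothesis $\|b(\bx)\| < R$ for all $\bx$, so the drift is uniformly bounded and the proposal is a Gaussian random walk of scale $\sqrt{\delta}$ perturbed by a bounded mean shift. This places P-MALA exactly within the class of algorithms for which \citet{Atachade2006} develops a geometric ergodicity theory, and the two displayed limit conditions on $\pi$ are precisely the super-exponential tail condition and the curvature (regularity) condition used there, originally due to Jarner and Hansen.

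First I would establish the standard minorisation ingredients: since $\pi \in \mathcal{C}^1$ is everywhere positive and the Gaussian proposal density $q(\bu|\bx)$ is continuous and strictly positive, the chain is $\mu^{Leb}$-irreducible and aperiodic, and every compact set is small \citep[Ch. 6]{MeynTweedieBook}. These reduce the problem to verifying a geometric drift condition $PV(\bx) \leq \rho V(\bx) + K\mathbf{1}_C(\bx)$ for some $\rho < 1$, some small set $C$, and a suitable Lyapunov function, the natural choice being $V(\bx) = \pi(\bx)^{-s}$ for small $s \in (0,1)$.

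Next I would carry out the tail analysis that supplies this drift condition. The super-exponential condition $\tfrac{\bx}{\|\bx\|}\cdot\nabla\log\pi(\bx)\to-\infty$ forces the log-density to decay faster than any linear function along rays, so that for $\|\bx\|$ large the mass of the proposal lying \emph{outward} of $\bx$ is overwhelmingly rejected, while the curvature condition $\limsup_{\|\bx\|\to\infty}\tfrac{\bx}{\|\bx\|}\cdot\tfrac{\nabla\pi(\bx)}{\|\nabla\pi(\bx)\|}<0$ controls the geometry of the level sets so that the acceptance region is asymptotically a half-space pointing inward. Because the drift $b(\bx)$ is bounded by $R$ while the super-exponential decay makes the relevant gradients grow without bound, the asymmetry it introduces into the proposal ratio $q(\bx|Y^*)/q(Y^*|\bx)$ and the shift it induces in the proposal mean are both asymptotically negligible relative to the $\pi(Y^*)/\pi(\bx)$ factor; the acceptance behaviour therefore matches that of the pure random walk to leading order, so Atchadé's theorem applies and delivers the geometric drift inequality.

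The main obstacle is this last reconciliation step: one must verify rigorously that the bounded proximal drift does not disturb the delicate balance between the super-exponential tail and the curvature condition that underlies geometric ergodicity in the random-walk case. Concretely, the difficulty lies in showing that the expected one-step decrease of $V$ outside a large ball remains strictly negative once the drift-induced perturbation of both the acceptance probability and the proposal-density ratio is accounted for; here the uniform bound $R$ is exactly what is needed to dominate these perturbation terms by the divergent radial gradient, so that they vanish in the limit $\|\bx\|\to\infty$ and the leading-order random-walk analysis of \citet{Atachade2006} carries through.
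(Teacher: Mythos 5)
Your proposal is correct and follows essentially the same route as the paper: both identify P-MALA as a random-walk Metropolis algorithm with bounded drift $\prox^{\delta/2}_g(\bx)-\bx$ and invoke the geometric ergodicity theory for that class under the super-exponential and curvature conditions of Jarner--Hansen type. The only difference is one of citation: the paper appeals to Theorem 4.1 of \citet{Schreck2013} (itself built on the \citet{Atachade2006} framework you cite directly), so the substance of the argument is the same.
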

\begin{proof}The proof of this result follows from the proof of geometric ergodicity for the Shrinkage-thresholding MALA \citep{Schreck2013}, which is general to all  algorithms with bounded drift, and where we note that the conditions on $\pi$, together with the bounded drift condition $\|\bx - \prox^\lambda_{g}(\bx)\| < R$, satisfy the assumptions of Theorem 4.1 of \citet{Schreck2013}. 
\end{proof}
Notice that it is always possible to enforce the bounded drift condition by composing $\prox^\lambda_{g}(\bx)$ with a projection onto an $\ell_2$-ball centred at $\bx$ (this is equivalent to using a truncated gradient as proposed in \citep{Roberts1996b}). Also, it is possible to relax the smoothness assumption to $\pi \in \mathcal{C}^0$ by adding assumptions A3 and A4 from \citet{Schreck2013}.

Finally, similarly to other MH algorithms based on local proposals, P-MALA may be geometrically ergodic yet perform poorly if the proposal variance $\delta$ is either too small or very large. Theoretical and experimental studies of MALA show that for many high-dimensional target densities the value of $\delta$ should be set to achieve an acceptance rate of approximately $40\% - 70\%$ \citep{pillai2012}. These results do not apply directly to P-MALA. However, given the similarities between MALA and P-MALA, it is reasonable to assume that the values of $\delta$ that are appropriate for MALA will generally also produce good results for P-MALA. In our experiments we have found that P-MALA performs well when $\delta$ is set to achieve an acceptance rate of $40\% - 60\%$. 

%

\subsubsection{Computation of the proximity mapping $\prox^{\delta/2}_{g}(\bx)$}\label{sss:ie3}
The computational performance of P-MALA depends strongly on the capacity to evaluate efficiently $\prox^{\delta/2}_{g}(\bx) = \argmax_{\bu \in \mathbb{R}^n}\, g(\bu) -\|\bu - \bx\|^2/\delta$. As mentioned previously, the computation of proximity mappings is the focus of significant research efforts because these operators are key to modern convex and non-convex optimisation. As a result, for many important models used in high-dimensional data analysis, signal and image processing, and statistical machine learning, there are now clever analytical or numerical techniques to evaluate these mappings efficiently (two examples of this are the total-variation and the nuclear-norm priors used in the experiments of Section 4). For a survey on the evaluation of proximity mappings and lists of some functions with known mappings please see \citet[Ch.6]{BoydBook} and \citet{Combettes2011}. 

The most general strategy for computing $\prox^{\delta/2}_{g}(\bx)$ is to note that \eqref{proxMap} is a convex optimisation problem that can frequently be solved or approximated quickly with state-of-the-art convex optimisation algorithms. \citet{Pesquet2014} presents these algorithms in the primal-dual framework and provides clear guidelines for parallel and distributed implementations. When applying these techniques within P-MALA it is important to use $\bx$ to \emph{hot-start} the optimisation, particularly in high-dimensional models where $\prox^{\delta/2}_{g}(\bx)$ is close to $\bx$ because $\delta$ has been set to a small value to achieve a good acceptance probability (recall that $\prox^{\delta/2}_{g}(\bx) \rightarrow \bx$ when $\delta \rightarrow 0$).

Alternatively, for many popular models it possible to approximate $\prox^{\delta/2}_{g}(\bx)$ very efficiently by using a decomposition $g(\bx) = g_1(\bx) + g_2(\bx)$ where $g_1 \in \mathcal{C}^1$ is concave with $\nabla g_1$ Lipschitz continuous and where $\prox^{\delta/2}_{g_2}$ can be evaluated efficiently. This enables the approximation
\begin{equation}\label{FBapprox}
\begin{split}
\prox^{\delta/2}_{g}(\bx) 	&= \argmax_{\bu \in \mathbb{R}^n}\, g_1(\bu) + g_2(\bu) -\|\bu - \bx\|^2/\delta \\
					&\approx \argmax_{\bu \in \mathbb{R}^n}\, g_1(\bx) + (\bu-\bx)^T\nabla g_1(\bx)+ g_2(\bu) -\|\bu - \bx\|^2/\delta \\
					& \approx \argmax_{\bu \in \mathbb{R}^n}\, g_2(\bu) -\|\bu - \bx - \delta \nabla g_1^T(\bx)\|^2/\delta \\
					& \approx \prox^{\delta/2}_{g_2}(\bx + \delta \nabla g_1(\bx))
\end{split}
\end{equation}
that is used in the \emph{forward-backward} or \emph{proximal gradient} algorithm \citep{Combettes2011}. We found this approximation to be very accurate for high-dimensional models because, again, $\delta$ is set to a small value and $\prox^{\delta/2}_{g}(\bx)$ is close to $\bx$, and as a result the approximation $g_1(\bu) \approx g_1(\bx) + (\bu-\bx)^T\nabla g_1(\bx)$ is generally accurate. Approximation \eqref{FBapprox} is useful for instance in linear inverse problems of the form $g(\bx) = -(\by-H\bx)^T \Sigma^{-1}(\by-H\bx)/2 -\alpha\phi(\bx)$ involving a  Gaussian likelihood and a convex regulariser $\phi(\bx)$ with a tractable proximity mapping [$\phi(\bx)$ is often some norm, which generally have known and fast proximity mappings \citep[Ch. 6.5]{BoydBook}]. Notice that many signal and image processing problems can be formulated in this way \citep{Combettes2011}. Moreover, if $g_1 \in \mathcal{C}^2$ it is also possible to use a second-order approximation
\begin{equation}\label{FBapprox2}
\prox^{\delta/2}_{g}(\bx) \approx \argmax_{\bu \in \mathbb{R}^n}\, (\bu-\bx)^T\nabla g_1(\bx)+ (\bu-\bx)^T \frac{H(\bx)}{2} (\bu-\bx) + g_2(\bu) -\|\bu - \bx\|^2/\delta
\end{equation}
where $H_{i,j}(\bx) = \partial^2 g_1/\partial x_i \partial x_j$ or an approximation that simplifies the computation of \eqref{FBapprox2} (for example, if $\prox^{\delta/2}_{g_2}$ is separable, then using a diagonal approximation of the Hessian matrix of $g_1$ leads to an approximation \eqref{FBapprox2} that can be computed in parallel for each element of $\bx$, and that has the same computational complexity as \eqref{FBapprox}). Again, many signal and image processing models it is possible to solve \eqref{FBapprox2} efficiently with a few iterations of the ADMM algorithm of \citet{Figueiredo2011}, which exploits the second-order information from $H(\bx)$ to improve convergence speed.

Finally, it is worth noting that although using an approximation of $\prox^{\delta/2}_{g}(\bx)$ can potentially reduce P-MALA's mixing speed, if the conditions for geometric ergodicity of Theorem \ref{Theo6} hold when $\prox^{\delta/2}_{g}(\bx)$ is evaluated exactly, then P-MALA implemented with an approximate mapping also converges geometrically to $\pi$ if the approximation error can be bounded by some $R^\prime > 0$ or if $\prox^\lambda_{g}(\bx)$ is followed by a projection to guarantee a bounded drift.

\subsubsection{Illustrative example}\label{sss:ie2}
For illustration we show an application of P-MALA to the density $\pi(x) \propto \exp(-x^4)$ depicted in Figure \ref{fig:Moreau}(c). We compare our results with MALA, with the truncated gradient MALA (MALTA) \citep{Roberts1996b}, and with the simplified manifold MALA (SMMALA) \citep{Girolami2011}. As explained previously, MALA is not geometrically ergodic for this target density due to the lighter-than-Gaussian tails. This can be cured by using MALTA, which is a bounded-drift random-walk  algorithm constructed by replacing $h(x) = \nabla \log \pi(x)$ in the MALA proposal with $h_{\epsilon_1}(x) = \epsilon_1 h(x) / \max(\epsilon_1,\|h(x)\|)$ for some $\epsilon_1 > 0$ \citep{Atachade2006}. Although geometrically ergodic, MALTA can converge very slowly if the truncation threshold $\epsilon_1$ is not set correctly. Setting good values for $\epsilon_1$ can be difficult in practice, particularly because values that appear suitable in certain regions of the state space are unsuitable in others. Alternatively, manifold MALA implemented using the (regularised) inverse Hessian $H^{-1}_{\epsilon_2}(x) = (12x^{2} + \epsilon_2)^{-1}$ is also geometrically ergodic if $\delta$ is sufficiently small (for this example $\delta \leq 6$) \citep{Krys2011}, however this algorithm can also converge slowly if the value of $\epsilon_2$ is not set properly.

Figures \ref{fig:P-MALA}(a)-(d) display the first $250$ samples of the chains generated with P-MALA, MALA, MALTA and SMMALA with initial state $X^{(0)} = 10$ and $\delta = 1$. We implemented MALTA and SMMALA using the values $\epsilon_1 = 20$ and $\epsilon_2 = 0.1$ that we adjusted during a series of pilot runs. We found that MALTA  behaves like a Random-walk  algorithm for smaller values of $\epsilon_1$, and that for larger values it rejects the proposed moves with very high probability and gets ``stuck''. Similarly,  we found that SMMALA is very sensitive to the value of $\epsilon_2$, with too small values leading to poor mixing around the mode and larger values to poor mixing in the tails. 

We observe in Figures \ref{fig:P-MALA}(a)-(d) that the chains generated with P-MALA and MALTA exhibit good mixing, that SMMALA has slower mixing, and that MALA has rejected all the proposed moves and failed to converge. We repeated this experiment using the initial state $X^{(0)} = 5$ and the same values for $\delta$, $\epsilon_1$ and $\epsilon_2$. The first $250$ samples of each chain are displayed in Figures \ref{fig:P-MALA}(e)-(h). Again, we observe the good mixing of P-MALA, the slower mixing of SMMALA, and the lack of ergodicity of MALA. However, we also observe that in this occasion MALTA got ``stuck'' at states where its mixing properties are very poor and failed to converge. We also repeated this experiment with HMC (not shown) and observed that it suffers from the same drawbacks as MALA.

\begin{figure}[h!]
\begin{minipage}[l1]{.245\linewidth}
  \centering
  \centerline{\includegraphics[width=4.5cm]{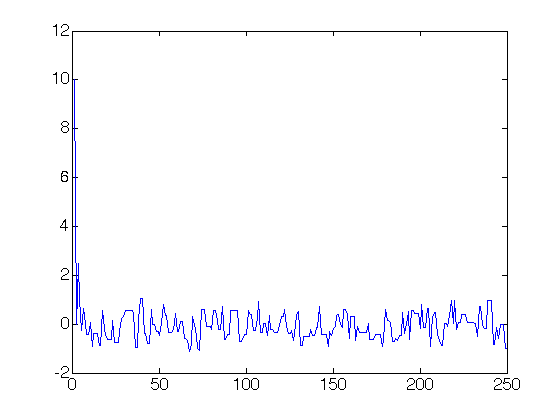}}
  \small{(a) P-MALA}
\end{minipage}
\begin{minipage}[l2]{.245\linewidth}
  \centering
   \centerline{\includegraphics[width=4.5cm]{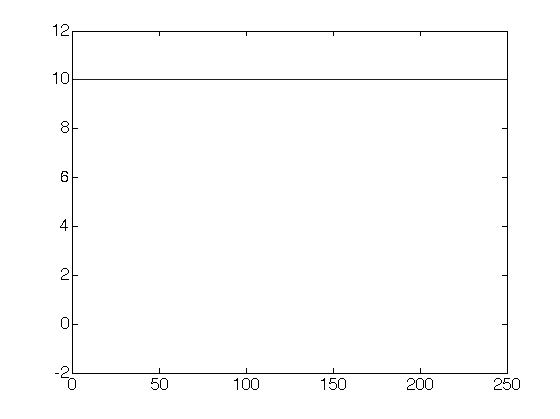}}
    \small{(b) MALA}
\end{minipage}
\begin{minipage}[l3]{.245\linewidth}
  \centering
     \centerline{\includegraphics[width=4.5cm]{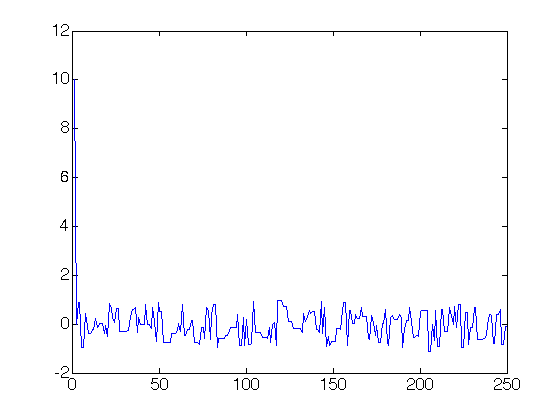}}
     \small{(c) MALTA}
\end{minipage}
\begin{minipage}[l4]{.245\linewidth}
  \centering
     \centerline{\includegraphics[width=4.5cm]{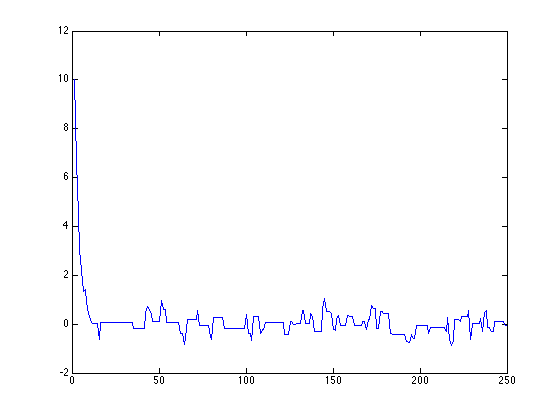}}
     \small{(d) SMMALA}
\end{minipage}

\begin{minipage}[l1]{.245\linewidth}
  \centering
     \centerline{\includegraphics[width=4.5cm]{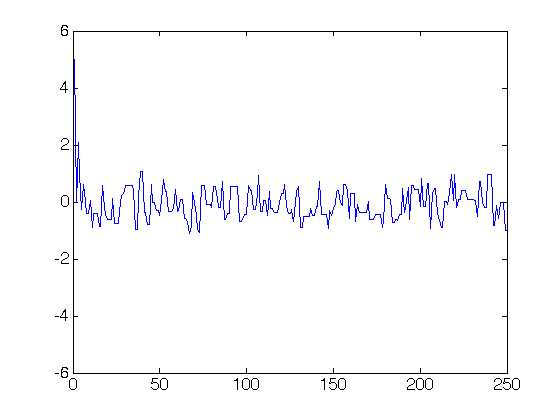}}
  \small{(e) P-MALA}
\end{minipage}
\begin{minipage}[l2]{.245\linewidth}
  \centering
   \centerline{\includegraphics[width=4.5cm]{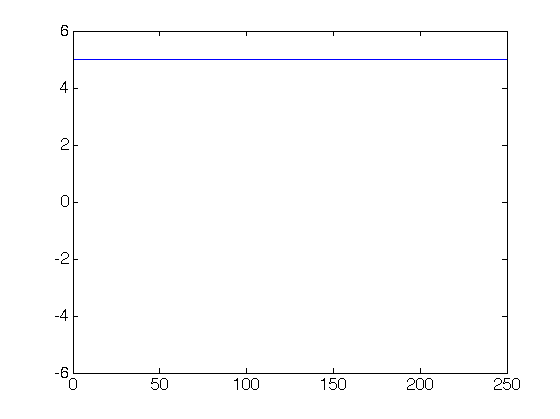}}
    \small{(f) MALA}
\end{minipage}
\begin{minipage}[l1]{.245\linewidth}
  \centering
     \centerline{\includegraphics[width=4.5cm]{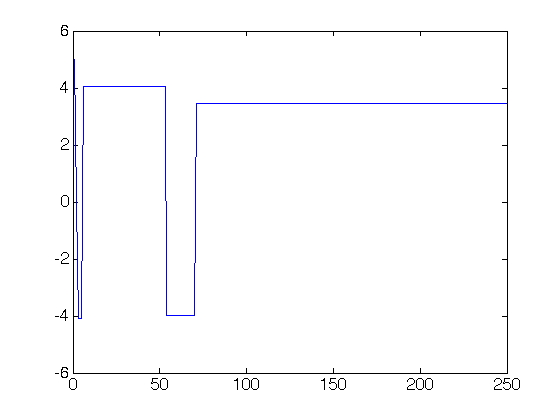}}
  \small{(g) MALTA}
\end{minipage}
\begin{minipage}[l1]{.245\linewidth}
  \centering
     \centerline{\includegraphics[width=4.5cm]{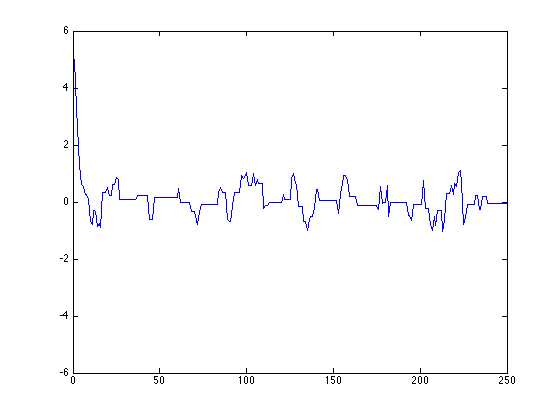}}
  \small{(h) SMMALA}
\end{minipage}
\caption{\small{Comparison between P-MALA, MALA, the truncated gradient MALA (MALTA), and simplified manifold MALA (SMMALA) using the one-dimensional density $\pi(x) \propto \exp\{-x^4\}$ and algorithm parameters $\delta = 1, \epsilon_1 = 20, \epsilon_2 = 0.1$. Initial state $X^{(0)} = 10$ (a)-(d) and $X^{(0)} = 5$ (e)-(h).}} \label{fig:P-MALA}
\end{figure}

\section{Applications}\label{sec:experiments}
This section demonstrates P-MALA on two challenging high-dimensional and non-smooth models that are widely used in statistical signal and image processing and that are not well addressed by existing MCMC methodology. The first example considers the computation of Bayesian credibility regions for an image resolution enhancement problem. The second example presents a graphical posterior predictive check of the popular \textit{nuclear-norm} model for low-rank matrices. 

\subsection{Bayesian image deconvolution with a total-variation prior}
In image deconvolution or deblurring problems, the goal is to recover an original image $\bx \in \mathbb{R}^n$ from a blurred and noisy observed image $\by \in \mathbb{R}^n$ related to $\bx$ by the linear observation model\footnote{note that bidimensional and tridimensional images can be represented as points in $\mathbb{R}^n$ via lexicographic ordering.} $\by = H\bx + \bw$,
where $H$ is a linear operator representing the blur point spread function and $\bw$ is the sample of a zero-mean white Gaussian vector with covariance matrix $\sigma^2\boldsymbol{I}_n$ \citep{hansen:2006}. This inverse problem is usually ill-posed or ill-conditioned, i.e., either $H$ does not admit an inverse or it is nearly singular, thus yielding highly noise-sensitive solutions. Bayesian image deconvolution methods address this difficulty by exploiting prior knowledge about $\bx$ in order to obtain more robust estimates. One of the most widely used image priors for deconvolution problems is the improper total-variation norm prior, $\pi(\bx) \propto \exp{\left(-\alpha\|\nabla_d \bx\|_1\right)}$, where $\nabla_d$ denotes the discrete gradient operator that computes the vertical and horizontal differences between neighbour pixels. This prior encodes the fact that differences between neighbour image pixels are often very small and occasionally take large values (i.e., image gradients are nearly sparse). Based on this prior and on the linear observation model described above, the posterior distribution for $\bx$ is given by
\begin{equation}\label{deconvolution}
\pi(\bx|\by) \propto \exp{\left[-\|\by-H\bx\|^2/2\sigma^2 -\alpha \|\nabla_d\bx\|_1\right]}.
\end{equation}
Image processing methods using \eqref{deconvolution} are almost exclusively based on maximum-a-posteriori (MAP) estimates of $\bx$ that can be efficiency computed using proximal optimisation algorithms \citep{Figueiredo2011}. Here we consider the problem of computing credibility regions for $\bx$, which we use to assess the confidence in the restored image. Precisely, we note that \eqref{deconvolution} is log-concave and use P-MALA to compute marginal $90\%$ credibility regions for each image pixel. There are several computational strategies for evaluating the proximity mapping of $g(\bx) = -\|\by-H\bx\|^2/2\sigma^2 -\alpha \|\nabla\bx\|_1$. Here we take advantage of the fact that in high-dimensional scenarios $\delta$ is typically set to a small value and use the approximation \eqref{FBapprox} $\prox^{\delta/2}_{g}(\bx) \approx \prox^{\delta/2}_{g_2}\{\bx + \delta\nabla g_1(\bx)/2\}$ with $g_1(\bx) = -\|\by-H\bx\|^2/2\sigma^2$ and $g_2(\bx) = -\alpha \|\nabla\bx\|_1$, and where we note that $\nabla g_1$ is Lipschitz continuous and that $\prox^{\delta/2}_{g_2}(\bx)$ can be efficiently computed using a parallel implementation of \citet{Chambolle}. 

Figure \ref{FigCameraman} presents an experiment with the ``cameraman'' image, which is a standard image to assess deconvolution methods \citep{Oliveira2009}. Figures \ref{FigCameraman}(a) and (b) show the original cameraman image ${\bx}_0$ of size $128\times128$ and a blurred and noisy observation $\by$, which we produced by convoluting ${\bx}_0$ with a uniform blur of size $9\times9$ and adding white Gaussian noise to achieve a blurred signal-to-noise ratio (BSNR) of $40$dB ($BRSN = 10\log_{10}\{\textrm{var}(H{\bx}_0)/\sigma^2\}$). The MAP estimate of $\bx$ obtained by maximising \eqref{deconvolution} is depicted in Figure \ref{FigCameraman}(c). This estimate has been computed with the proximal optimisation algorithm of \citet{Figueiredo2011}, and by using the technique of \citet{Oliveira2009} to determine the value of $\alpha$. By comparing Figures \ref{FigCameraman}(a) and \ref{FigCameraman}(c) we observe that the MAP estimate is very accurate and that it restored the sharp edges and fine details in the image. Finally, Figure \ref{FigCameraman}(d) shows the magnitude of the marginal $90\%$ credibility regions for each pixels, as measured by the distance between the $95\%$ and $5\%$ quantile estimates. These estimates were computed from a $20\,000$-sample chain generated with P-MALA using a thinning factor of $1\,000$ to reduce the algorithm's memory foot-print and $1$ million burn-in iterations. These credibility regions show that there is a background level of uncertainty of about $30$ grey-levels, which is approximately $10\%$ of the dynamic range of the image ($256$ grey-levels). More importantly, we observe that there is significantly more uncertainty concentrated at the contours and object boundaries in the image. This reveals that model \eqref{deconvolution} is able to accurately detect the presence of sharp edges in the image but with some uncertainty about their exact location. Therefore computing credibility regions could be particularly relevant in applications that use images to determine the location and the size of objects, or to compare the size of a same object appearing in two different images. For example, in oncological medical imaging, where deconvolution is increasingly used to improve the resolution of images that are subsequently used to assess the evolution of tumour boundaries over time and make treatment decisions. 

Moreover, to asses the efficiency of P-MALA we repeated the experiment with a variation of MALA for partially non-differentiable target densities that uses only the gradient of the differentiable term of \eqref{deconvolution}, i.e., $\nabla\log g_1 (\bx)= H^T(\by-H\bx)/\sigma^2$ (this variation of MALA was recently used in \cite{Schreck2013} for a Bayesian variable selection problem with a Bernulli--Laplace prior that is also non-differentiable). Figure \ref{CameramanACF} compares the first $20$ lags of the sample autocorrelation function of the chains generated with P-MALA and MALA, computed using $\log\pi(\bx|\by)$ as scalar summary. We observe that the chain produced with P-MALA has significantly lower autocorrelation and therefore higher effective sample size\footnote{Recall that $ESS = N\{1+2\sum_k \gamma(k)\}^{-1}$, where $N$ is the total  samples and $\sum_k \gamma(k)$ is the sum of the $K$ monotone sample auto-correlations which we estimated with the initial monotone sequence estimator \citep{Geyer1992}} (ESS). P-MALA was almost twice as computationally expensive as MALA due to the overhead associated with evaluating the proximity mapping of $g_2$ (the total computational times were $49$ hours for P-MALA and $28$ hours for MALA). However, because P-MALA is exploring the parameter space significantly faster than MALA, its time-normalised ESS was $4.5$ times better than that of MALA ($50.8$ and $11.04$ samples per hour respectively), confirming the good performance of the proposed methodology. Preconditioning MALA with the (regularised) inverse Fisher information matrix $(H^T H + \epsilon\mathbb{I}_n)^{-1}$ led to poor mixing, possibly because most of the correlation structure in the posterior distributions comes from the non-differentiable prior $\pi(\bx) \propto \exp{[-\alpha \|\nabla_d\bx\|_1]}$ and is not captured by this metric.

\begin{figure}[htbp!]
\begin{minipage}[l2]{.5\linewidth}
  \centering
  \centerline{\includegraphics[width=8.2cm]{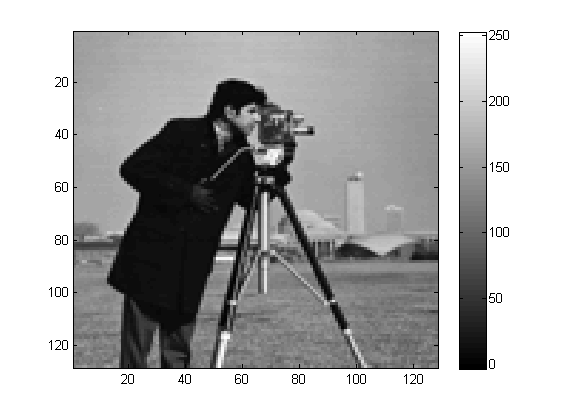}}
  \small{(a)}
\end{minipage}
\begin{minipage}[l2]{.5\linewidth}
  \centering
  \centerline{\includegraphics[width=8.2cm]{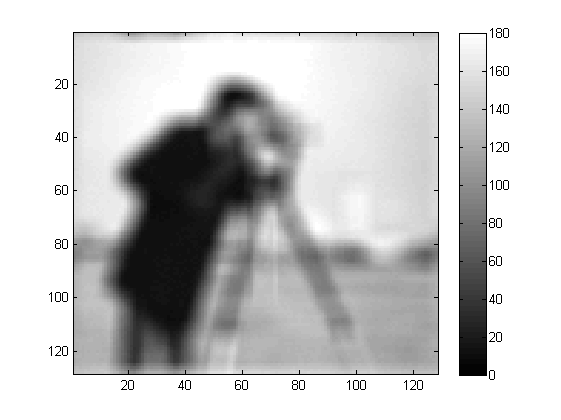}}
  \small{(b)}
\end{minipage}
\begin{minipage}[l2]{.5\linewidth}
  \centering
  \centerline{\includegraphics[width=8.2cm]{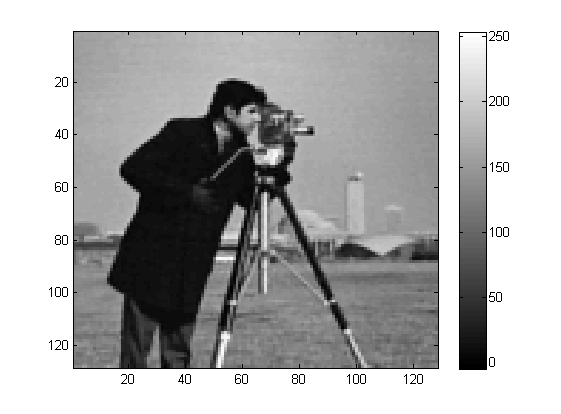}}
  \small{(c)}
\end{minipage}
\begin{minipage}[l2]{.5\linewidth}
  \centering
  \centerline{\includegraphics[width=8.2cm]{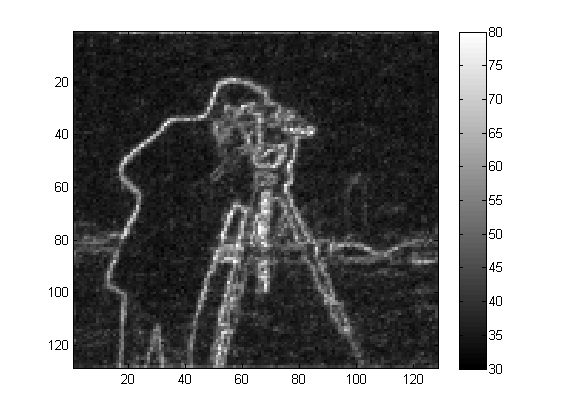}}
  \small{(d)}
\end{minipage}
\caption{\small{(a) Original cameraman image ($128\times128$ pixels), (b) Blurred image, (c) MAP estimate computed with \citep{Figueiredo2011}, (d) Pixel-wise $90\%$ credibility intervals estimated with P-MALA.}} \label{FigCameraman}
\end{figure}

\begin{figure}[htbp!]
\begin{minipage}[l2]{.5\linewidth}
  \centering
  \centerline{\includegraphics[width=6.0cm]{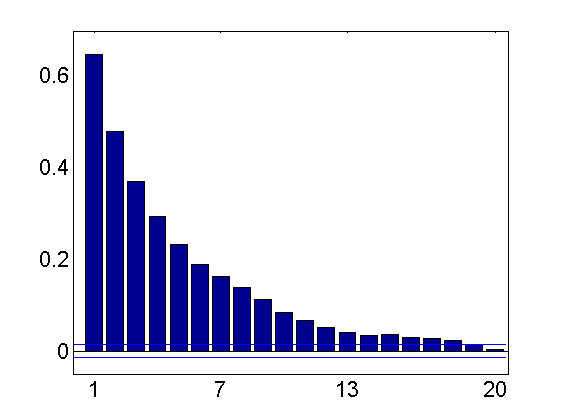}}
  \small{P-MALA}
\end{minipage}
\begin{minipage}[l2]{.5\linewidth}
  \centering
  \centerline{\includegraphics[width=6.0cm]{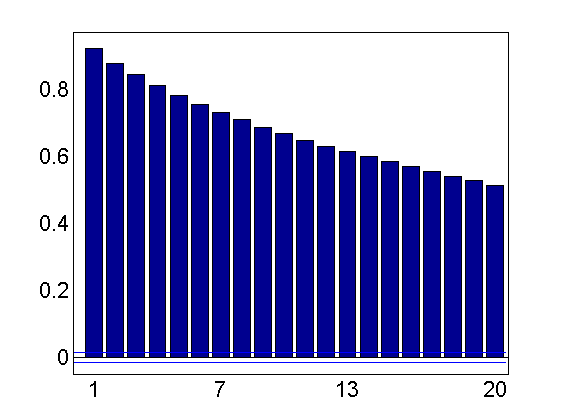}}
  \small{MALA}
\end{minipage}
\caption{\small{Autocorrelation comparison between P-MALA and MALA when simulating from \eqref{deconvolution}.}} \label{CameramanACF}
\end{figure}

\subsection{Nuclear-norm models for low-rank matrix estimation}
In this experiment we use P-MALA to perform a graphical posterior predictive check of the widely used \emph{nuclear norm} model for low-rank matrices \citep{Fazel}. Simulating samples from distributions involving the nuclear norm is challenging because matrices are often high-dimensional and because this norm is not continuously differentiable; thus making it difficult to use gradient-based MCMC methods such as MALA and HMC. For simplicity we present our analysis in the context of matrix denoising, however the approach can be easily applied to other low-rank matrix estimation problems such as matrix completion and decomposition \citep{Chandrasekaran2011,Chandrasekaran2012, Candes2009b, Candes2009c, Candes2011}.

Let $\bx$ be an unknown low-rank matrix of size $n = n_1 \times n_2$ (represented as a point in $\mathbb{R}^n$ by lexicographic ordering), and $\by = \bx + \bw$ a noisy observation contaminated by zero-mean white Gaussian noise with covariance matrix $\sigma^2\boldsymbol{I}_n$. For example, $\bx$ can represent a low-rank covariance matrix in a model selection problem, the background component of a video signal in an object tracking problem, or a rank-limited image in a signal restoration or reconstruction problem \citep{Chandrasekaran2012, Candes2011, Recht}. In the low-rank matrix denoising problem, we seek to recover $\bx$ from $\by$ under the prior knowledge that $\bx$ has low rank; that is, that most of its singular values are zero. A convenient model for this type of problem is the nuclear norm prior $ \pi(\bx) \propto \exp(-\alpha ||\bx||_*)$, where $||\bx||_*$ denotes the nuclear norm of $\bx$ and is defined as the sum of its singular values \citep{Fazel}. The popularity of this prior stems from the fact that the nuclear norm is the best convex approximation of the rank function, and it leads to a posterior distribution that is  log-concave and whose MAP estimate can be efficiently computed using proximal algorithms \citep{Recht}. The posterior distribution of $\bx$ given $\by$ is
\begin{equation}\label{Xposterior}
\pi(\bx|\by) \propto \exp{(-||\by-\bx||^2/2\sigma^2 -\alpha ||\bx||_*)},
\end{equation}
where $\sigma^2$ and $\alpha$ are fixed positive hyper-parameters. It is useful to think of \eqref{Xposterior} as an extension the Bayesian LASSO model \citep{ParkCasella2008} to matrices with sparse singular values, in which the singular values of $\bx$ are assigned exponential priors.

It is well documented that under certain conditions on the true rank and $\sigma^2$, the MAP estimate maximising \eqref{Xposterior} accurately recovers the true null and column spaces of $\bx$ \citep{Candes2009b, Candes2009c, Negahban, Mazumder}. This has led to the general consensus that the nuclear-norm prior is a useful model for low-rank matrix estimation problems and that the errors introduced by using the convex approximation to the rank function do not have a significant effect on the inferences. Here we adopt a Bayesian model checking viewpoint and assess the nuclear-norm model by comparing the observation $\by$ to replicas $\by^{rep}$ generated by drawing samples from the posterior predictive distribution  $f(\by^{rep}|\by) = \int_{\mathbb{R}^{n\times m}} f(\by^{rep}|\bx)\pi(\bx|\bx)\textrm{d}\bx$, as recommended by \citet[Ch. 6]{GelmanBook}. This technique for checking the fit of a model to data is based on the intuition that \emph{``If the model fits, then replicated data generated under the model should look similar to observed data. To put it another way, the observed data should look plausible under the posterior predictive distribution."} \citep[Ch. 6]{GelmanBook}. In this paper we perform a graphical check and compare visually $\by$ and its replicas $\by^{rep}$. In specific applications one could also use $\by^{rep}$ to compute posterior predictive p-values that evaluate specific aspects of the model that are relevant to the application \citep[Ch. 6]{GelmanBook}.

Figure \ref{FigCheckerboard} presents an experiment with MATLAB's ``checkerboard'' image. Figure \ref{FigCheckerboard}(a) shows the original checkerboard image $\bx_0$ of size $n = 64\times 64$ pixels and rank $2$. Figure \ref{FigCheckerboard}(b)  shows a noisy observation $\by$ produced by adding Gaussian noise with variance $\sigma^2 = 0.01$, leading to a signal-to-noise ratio (SNR) of $15$dB which is standard for image denoising problems ($SNR = 10\log_{10}(||\bx_0||^2/n m\sigma^2)$). The MAP estimate obtained by maximising \eqref{Xposterior} is depicted in Figure \ref{FigCheckerboard}(c). This estimate has been computed via singular value soft-thresholding, and by setting $\alpha=1.15/\sigma^2$ to minimise Stein's unbiased risk estimator, which are standard procedures in low-rank matrix denoising \citep{Candes2012}. By comparing Figures \ref{FigCheckerboard}(a) and \ref{FigCheckerboard}(c) we observe that the MAP estimate is indeed very close the the original image $\bx_0$, confirming that the nuclear norm prior is a good model for low-rank signals (the estimation mean-squared error is $6.45 \times 10^{-4}$, which is $15$ times better than the original error of $0.01$). Note however that this prior is a simplistic model for $\bx_0$ in the sense that it does not include many of its main features; e.g., that $\bx_0$ is piecewise constant, periodic, highly symmetric, or that its pixel only take 3 values. Also, its representation of the singular values is approximate given that the true singular values are perfectly sparse rather than exponentially distributed. Therefore it is interesting to examine if the predictions of the model exhibit all the relevant features of $\by$, or if they highlight limitations of \eqref{Xposterior}. 

Figures \ref{FigCheckerboard}(d)-(i) depict six random replicas of $\by$ drawn from the posterior predictive distribution $f(\by^{rep}|\by)$ generated with P-MALA. We observe that the replicas are visually very similar to the original observation depicted in Figure \ref{FigCheckerboard}(b) and exhibit all of the main structural features of the checkerboard pattern that we mentioned above (e.g., periodicity, symmetries, etc.) as well as a grey-level histogram that is very similar to that of $\by$. This suggests that the model is indeed capturing the main visual characteristics our data. The replicas for this experiment were generated by using P-MALA to simulate $N=20\,000$ samples $\{X^{(t)}, t=1,\ldots,N\}$ distributed according to \eqref{Xposterior}, and then sampling $Y^{rep (t)}|X^{(t)} \sim \mathcal{N}[X^{(t)},\boldsymbol{I}\sigma^2]$ (the pictures displayed in Figure \ref{FigCheckerboard}(d)-(i) correspond to $t = 7\,500, 10\,000, 12\,500, 15\,000, 17\,500$, and $20\,000$). To implement P-MALA for \eqref{Xposterior} we used the exact proximity mapping
$$
\prox_g^{\delta/2}(\bx) = SVT[(\delta \by + 2\sigma^2 \bx)/(\delta + 2\sigma2), \alpha\delta\sigma^2/(\delta + 2\sigma2)],
$$ 
where $SVT(\bx,\tau)$ denotes the singular value soft-thresholding  operator on $\bx$ with threshold $\tau$, that is evaluated by computing the singular value representation of $\bx$ and replacing the singular values $\{s_i: i = 1,\ldots,\min(n1,n2)\}$ with $\max{(s_i - \tau,0)}$. We used $2\,000$ burn-in iterations, a thinning factor of $100$ to reduce the algorithm's memory foot-print, and tuned the value of $\delta$ to achieve an acceptance probability of approximately $50\%$. 

To illustrate the good mixing properties of P-MALA for this $4\,096$-dimensional simulation problem, Figure \ref{FigCheckerboard2} shows a $1\,000$-sample trace plot and an autocorrelation function plot of the chain $\{X^{(t)}, t=1,\ldots,N\}$, where we have used $g[X^{(t)}]$ as scalar summary. The computing time, ESS and time-normalised ESS for this experiment are $19$ minutes, $7\,930$ samples and $7.05$ samples per second. For comparison, repeating this experiment with a random walk  (RWMH) algorithm required $6.5$ minutes and produced a time-normalised ESS of $0.23$ samples per second, approximately $30$ times worse than P-MALA. Finally, note that MALA and HMC are not well defined for this model because $||\bx||_*$ is not differentiable at points where $\bx$ is rank deficient. From a practical standpoint one can still apply MALA to \eqref{Xposterior} because the probability of reaching a non-differentiable state is zero, however in our experience MALA does require $\pi \in \mathcal{C}^1$ to perform well. Repeating this experiment with MALA produced a time-normalised ESS of $0.08$ samples per second, $90$ times worse than P-MALA and $30$ times worse than RWHM (results computed by setting $\delta$ to achieve an acceptance rate of approximately $60\%$ and by computing the gradient of $||\bx||_*$ via singular-value decomposition \citep{Papadopoulo}).

\begin{figure}[htbp!]
\begin{minipage}[l2]{.33\linewidth}
  \centering
  \centerline{\includegraphics[width=6.5cm]{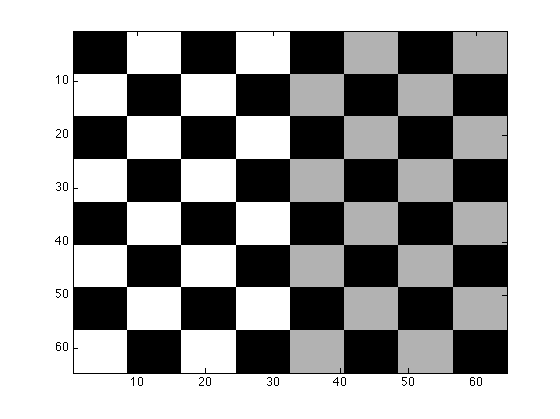}}
  \small{(a)}
\end{minipage}
\begin{minipage}[l2]{.33\linewidth}
  \centering
  \centerline{\includegraphics[width=6.5cm]{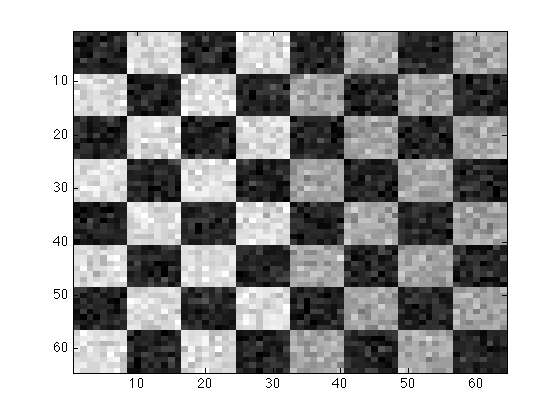}}
  \small{(b)}
\end{minipage}
\begin{minipage}[l2]{.33\linewidth}
  \centering
  \centerline{\includegraphics[width=6.5cm]{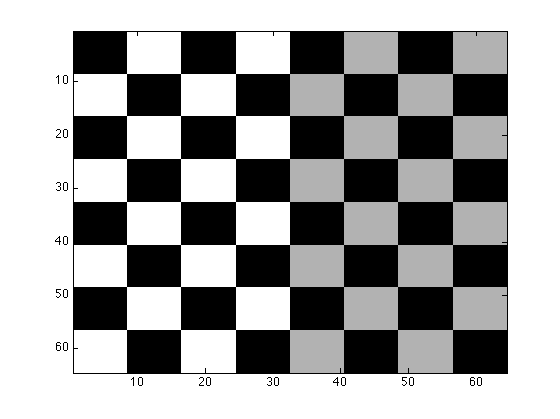}}
  \small{(c)}
\end{minipage}
\begin{minipage}[l2]{.33\linewidth}
  \centering
  \centerline{\includegraphics[width=6.5cm]{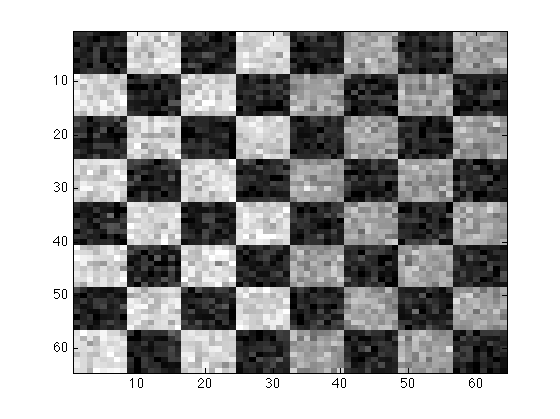}}
  \small{(d)}
\end{minipage}
\begin{minipage}[l2]{.33\linewidth}
  \centering
  \centerline{\includegraphics[width=6.5cm]{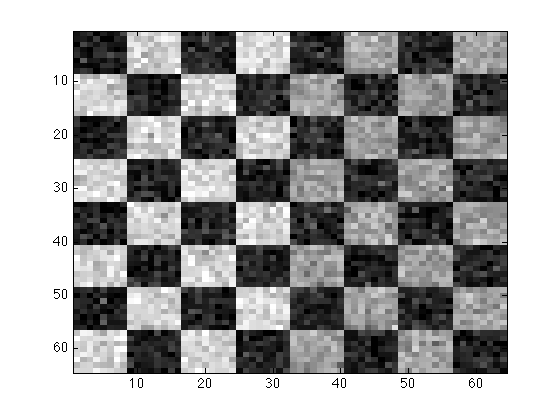}}
  \small{(d)}
\end{minipage}
\begin{minipage}[l2]{.33\linewidth}
  \centering
  \centerline{\includegraphics[width=6.5cm]{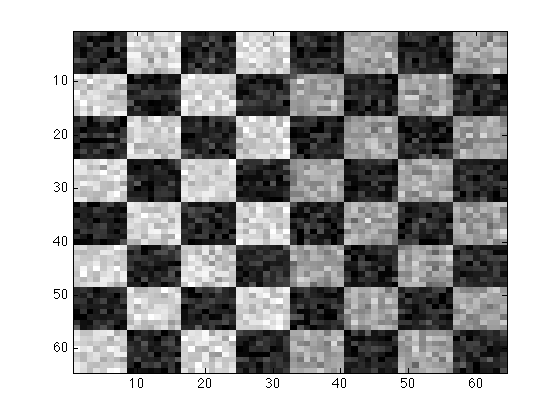}}
  \small{(f)}
\end{minipage}
\begin{minipage}[l2]{.33\linewidth}
  \centering
  \centerline{\includegraphics[width=6.5cm]{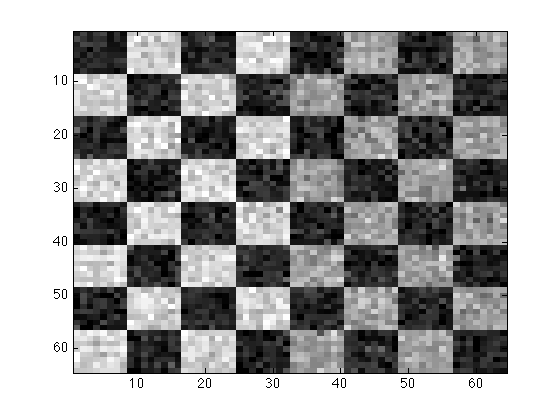}}
  \small{(g)}
\end{minipage}
\begin{minipage}[l2]{.33\linewidth}
  \centering
  \centerline{\includegraphics[width=6.5cm]{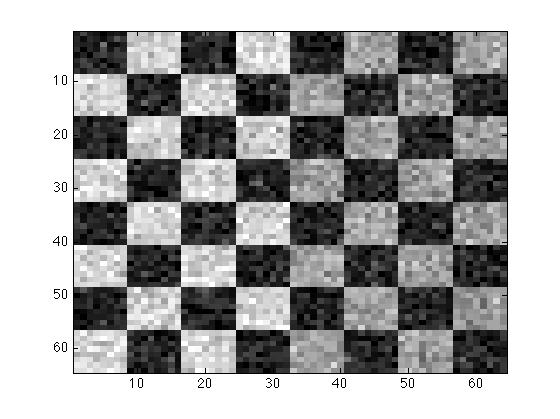}}
  \small{(h)}
\end{minipage}
\begin{minipage}[l2]{.33\linewidth}
  \centering
  \centerline{\includegraphics[width=6.5cm]{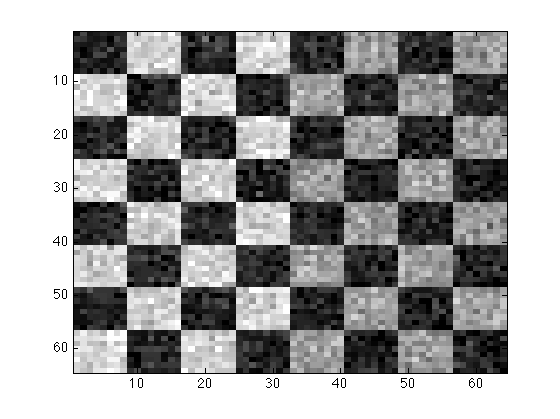}}
  \small{(i)}
\end{minipage}
\caption{\small{(a) Original checkerboard image $\bx_0$ ($64\times64$ pixels, rank $2$), (b) Noisy observation $\by = \bx_0 + \bw$, (c) MAP estimate associated with \eqref{Xposterior}, (d)-(i)  Six replicas of $\by$ generated by sampling from the posterior predictive distribution $f(\by^{rep}|\by)$.}} \label{FigCheckerboard}
\end{figure}

\begin{figure}[htbp!]
\begin{minipage}[l2]{.5\linewidth}
  \centering
  \centerline{\includegraphics[width=9.5cm]{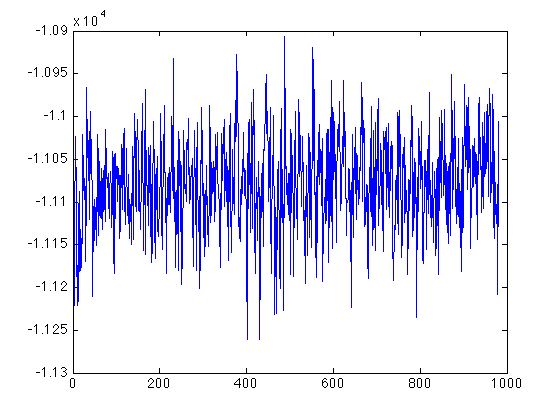}}
  \small{(a)}
\end{minipage}
\begin{minipage}[l2]{.5\linewidth}
  \centering
  \centerline{\includegraphics[width=9.5cm]{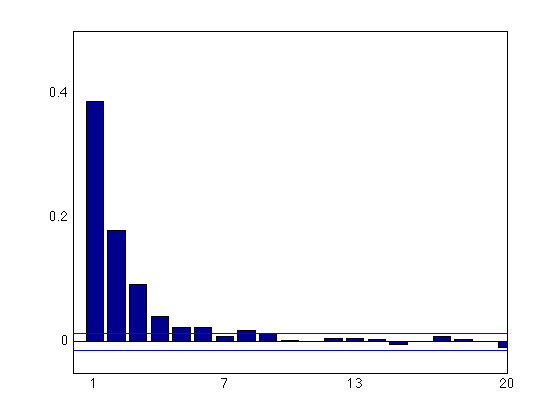}}
  \small{(b)}
\end{minipage}
\caption{\small{(a) 1000-sample trace plot and (b) autocorrelation plot using $g[X^{(t)}]$ as scalar summary.}} \label{FigCheckerboard2}
\end{figure}

\section{Conclusion}\label{sec:conclusion}
This paper studied a new Langevin MCMC algorithm that use convex analysis, namely Moreau approximations and proximity mappings, to sample efficiently from high-dimensional densities $\pi$ that are log-concave and possibly not continuously differentiable. This method is based on a new first-order approximation for Langevin diffusions that is constructed by first approximating the original diffusion $Y(t)$ with an auxiliary Langevin diffusion $Y_\lambda(t)$ with ergodic measure $\pi_\lambda$, and then discretising $Y_\lambda(t)$ using a forward Euler scheme with time step $\delta=2\lambda$. The resulting Markov chain, P-ULA, is similar to ULA except for the fact that it uses proximity mappings of $\log\pi$ instead of gradient mappings. This modification leads to a chain with favourable convergence properties that is geometrically ergodic in many cases for which ULA is transient or explosive. The proposed sampling method, P-MALA, combines P-ULA with a  step guaranteeing convergence to the desired target density. It is shown that P-MALA inherits the favourable convergence properties of P-ULA and is geometrically ergodic in many cases for which MALA does not converge geometrically and for which manifold MALA is only geometric if the time step is sufficiently small. Moreover, because P-MALA uses proximity mappings instead of gradients it can be applied to target densities that are not continuously differentiable, whereas MALA and manifold MALA require $\pi \in \mathcal{C}^1$ and $\pi \in \mathcal{C}^2$ to perform well. Finally, P-MALA was validated and compared to other MCMC algorithms through illustrative examples and applications to real data, including two challenging high-dimensional experiments related to image deconvolution and low-rank matrix denoising. These experiments show that P-MALA can make Bayesian inference techniques practically feasible for high-dimensional and non-differentiable models that are not well addressed by the existing MCMC methodology. 

Moreover, although only directly applicable to log-concave distributions, P-MALA can be used within a Gibbs sampler to simulate from more complex models. For example, it can be easily applied to a large class of bilinear models of the form \eqref{deconvolution} in which there is uncertainty about the linear operator $H$ (e.g., semi-blind image restoration), as this models can be conveniently split into two high-dimensional conditional densities that are log-concave. Similarly, its application to hierarchical models involving unknown regularisation or noise power hyper-parameters is also straightforward. Future works will focus on the application of P-MALA to the development of new statistical signal and image processing methodologies. In particular, we plan to develop a general set of tools for computing Bayesian estimators and credibility regions for high-dimensional convex linear and bilinear inverse problems, as well as stochastic optimisation algorithms for empirical Bayes estimation in signal and image processing. Another important perspective for future work is to investigate the rate of convergence of P-MALA as a function of the dimension of $\bx$. This cannot be achieved with the mathematical techniques used in of Theorems \ref{Theo1} \ref{Theo5} and \ref{Theo6}, and will require using a more appropriate set of techniques based on the Wasserstain framework (see \citet{Ottobre2014} for more details). Preliminary analyses suggest that P-MALA's mixing time depends on the shape of (the tail of) $\pi$, unlike the random walk  algorithm and MALA whose scaling is, under some conditions, independent of $\pi$.

Also, in some applications the performance of P-MALA could be improved by introducing some form of adaptation or preconditioning that captures the local geometry of the target density. This could be achieved by learning the density's covariance structure online \citep{Atachade2006} or by using an appropriate position-dependent metric. For models with $\pi \in \mathcal{C}^2$ this metric can be derived from the Fisher information matrix or the Hessian matrix as suggested in \citet{Girolami2011}, and for other log-concave densities perhaps by using preconditioning techniques from the convex optimisation literature, such as \citet{Marnissi} for instance. A key factor will be the availability of efficient algorithms for evaluating proximity mappings on non-canonical Euclidean spaces (i.e., defined using a quadratic penalty functions of the form $(\bu-\bx)^TA(\bx)(\bu-\bx)$ for some positive definite matrix $A(\bx)$). This topic currently receives a lot of attention in the optimisation literature and is the focus of important engineering efforts. Alternatively, one could also consider extending our methods to other diffusions that are more robust to anisotropic target densities \citep{Stramer99a,Stramer99b,Stramer2002}. 

We emphasise at this point that P-MALA complements rather than substitutes existing MALA and HMC methods by making high-dimensional simulation more efficient for target densities that are log-concave and have fast proximity mappings, in particular when they are not continuously differentiable. However, there are many models for which state-of-the-art MALA and HMC methods perform very well and for which P-MALA would not be applicable or computationally competitive. 

Finally, we acknowledge that since the first preprint of this work \citep{Pereyra2013}, two other works have independently proposed using proximity mappings in MCMC algorithms. These algorithms are similar to P-MALA in that they use thresholding operators within MALA and HMC algorithms (thresholding operators are a particular type of proximity mapping), but otherwise differ significantly from P-MALA. In particular, \citet{Schreck2013} considers a MALA for a variable selection problem and uses thresholding/shrinking operators to design a proposal distribution with atoms at zero (i.e., that generates sparse vectors with positive probability). \citet{Chaari2014} also considers an algorithm for a similar variable selection problem related to signal processing. Similarly to \citet{Schreck2013} that algorithm also uses thresholding operators, but to approximate gradients within an HMC leap-frog integrator. However, because thresholding operators are not continuously differentiable it is not clear if this integrator preserves volume and more crucially if the resulting HMC algorithm converges exactly to the desired target density.

\section*{Acknowledgments}
The author would like to thank the editor and two anonymous reviewers for their valuable suggestions to improve the manuscript. The author is also grateful to Ioannis Papastathopoulos, Gersende Fort, Nick Whiteley, Peter Green, Jonathan Rougier, Guy Nason, Nicolas Dobigeon, Steve McLaughlin, Hadj Batatia and Jean-Christophe Pesquet for helpful comments. Marcelo Pereyra currently holds a Marie Curie Intra-European Fellowship for Career Development. This work was in part supported by the SuSTaIN program - EPSRC grant EP/D063485/1 - at the Department of Mathematics, University of Bristol, and by a French Ministry of Defence postdoctoral fellowship.

\bibliography{bibliography}
\bibliographystyle{agsm}

\end{document}